\newtheorem{theorem}{Theorem}[section]
\newtheorem{lemma}[theorem]{Lemma}
\newtheorem{corollary}[theorem]{Corollary}
\newtheorem{proposition}[theorem]{Proposition}
\newtheorem{definition}[theorem]{Definition}
\newtheorem{remark}[theorem]{Remark}
\title{Characterizing NC${}^1$ with Typed Monoids} 
\author{
    Anuj Dawar\thanks{Research funded in part by UK Research and Innovation (UKRI) under the UK government’s Horizon Europe funding guarantee: grant number EP/X028259/1.}\\
    \texttt{anuj.dawar@cl.cam.ac.uk}
    \and
    Aidan T. Evans\footnote{Corresponding Author}\\
    \texttt{ate26@cam.ac.uk}
}
\date{
    Department of Computer Science and Technology\\ 
    University of Cambridge, United Kingdom
}
\begin{document}

\newcommand{\class}[1]{\mathrm{#1}}
\newcommand{\PT}{\class{P}}
\newcommand{\NP}{\class{NP}}
\newcommand{\TC}{\class{TC}}
\newcommand{\NC}{\class{NC}}

\newcommand{\logic}[1]{\mathrm{#1}}
\newcommand{\FO}{\logic{FO}}
\newcommand{\str}[1]{\mathfrak{#1}}
\newcommand{\tup}[1]{\overline{#1}}
\newcommand{\fin}{\mathrm{fin}}
\newcommand{\syn}{\mathrm{syn}}

\maketitle

\begin{abstract}
    Krebs et al.~(2007) gave a characterization of the complexity class $\TC^0$ as the class of languages recognized by a certain class of typed monoids.  The notion of typed monoid was introduced to extend methods of algebraic automata theory to infinite monoids and hence characterize classes beyond the regular languages.  We advance this line of work beyond $\TC^0$ by giving a characterization of $\NC^1$.  This is obtained by first showing that $\NC^1$ can be defined as the languages expressible in an extension of first-order logic using only unary quantifiers over regular languages.  The expressibility result is a consequence of a general result showing that finite monoid multiplication quantifiers of higher dimension can be replaced with unary quantifiers in the context of interpretations over strings, which also answers a question of Lautemann et al.~(2001).  We establish this collapse result for a much more general class of interpretations  using results on interpretations due to  Boja\'{n}czyk et al. (2019), which may be of independent interest.
\end{abstract}

\section{Introduction}

Much work in theoretical computer science is concerned with studying classes of formal languages, whether these are classes defined in terms of grammars and expressions, such as the class of regular or context-free languages, or whether they are \emph{complexity classes} such as $\PT$ and $\NP$, defined by resource bounds on machine models.  Indeed, the distinction between these is largely historical as most classes of interest admit different characterizations based on machine models, grammars, logical definability, or algebraic expressions. The class of regular languages can be characterized as the languages accepted by linear-time-bounded single-tape Turing machines~\cite{hennie1965one} while $\PT$ can be characterized without reference to resources as the languages recognized by multi-head two-way pushdown automata~\cite{cook1971characterizations}.  The advantage of the variety of characterizations is, of course, the fact that these bring with them different mathematical toolkits that can be brought to the study of the classes.

The class of regular languages has arguably the richest theory in this sense of diversity of characterizations. Most students of computer science learn of the equivalence of deterministic and nondeterministic finite automata, regular languages and linear grammars and many also know that the regular languages are exactly those definable in monadic second-order logic with an order predicate.  Perhaps the most productive approach to the study of regular languages is via their connection to finite monoids.  Every language $L$ has a syntactic monoid, which is finite if, and only if, $L$ is regular.  Moreover, closure properties of classes of regular languages relate to natural closure properties of classes of monoids, via Eilenberg's Correspondence Theorem~\cite{eilenberg1976automata}.  Finally, we have \emph{Krohn-Rhodes theory} which decomposes regular languages into elementary components; such a decomposition has found applications not only in semigroup theory but also, for example, in physics \cite[Sections 4.8a, 4.10]{rhodes2010applications} and the modeling of biochemical reactions \cite[Chapter 6, Part I]{rhodes2010applications}. All together, these tools give rise to \emph{algebraic automata theory}---which leads to the definition of natural subclasses of the class of regular languages, to effective decision procedures for automata recognizing such classes, and to separation results.

When it comes to studying computational complexity, we are mainly interested in classes of languages richer than just the regular languages.  Thus, the syntactic monoids of the languages are not necessarily finite and the extensive tools of Krohn-Rhodes theory are not available to study them.  Nonetheless, some attempts have been made to extend the methods of algebraic automata theory to classes beyond the regular languages.  Most significant is the work of Krebs and collaborators~\cite{behle2007linear,behle2011typed,krebs2007characterizing,krebs2008typed,cano2021positive}, which introduces the notion of \emph{typed monoids}.  The idea is to allow for languages with infinite syntactic monoids, but limit the languages they recognize by associating with the monoids a finite collection of types.  This allows for the formulation of a version of Eilenberg's Correspondence theorem associating closure properties on classes of typed monoids with corresponding closure properties of classes of languages.  In particular, this implies that most complexity classes of interest can be uniquely characterized in terms of an associated class of typed monoids~\cite{behle2011typed}.  An explicit description of the class characterizing uniform $\TC^0$ is given by Krebs et al.~\cite{krebs2007characterizing,krebs2008typed} and, furthermore, provides a decomposition of $\TC^0$ into elementary components analogous to the Krohn-Rhodes theory for regular languages.  This is obtained through a general method which allows us to construct typed monoids corresponding to \emph{unary quantifiers} defined from specific languages~\cite{krebs2008typed} (see also Theorem~\ref{thm:logcircalgequiv} below).

In this paper, we extend this work to obtain a characterization and decomposition of (uniform) $\NC^1$ as the class of languages recognized by the collection of typed monoids obtained as the closure under \emph{ordered strong block products} of three typed monoids: the group of integers with types for positive and non-positive integers; the monoid of the natural numbers with types for the square numbers and non-square numbers; and a finite non-solvable group such as $S_5$ with a type for each subset of the group.  Full definitions of these terms follow below.  Our result is obtained by first characterizing  $\NC^1$ in terms of logical definability in an extension of first-order logic with only unary quantifiers.  It is known that any regular language whose syntactic monoid is a non-solvable group is complete for $\NC^1$ under reductions definable in first-order logic with arithmetic predicates ($\FO[+,\times]$)~\cite{barrington1990uniformity}.  From this, we know we can describe $\NC^1$ as the class of languages definable in an extension of $\FO[+,\times]$ with quantifiers (of arbitrary arity) associated with the regular languages recognized by the monoid $S_5$.  We show that the family of such quantifiers associated with any finite monoid can be replaced with just the unary quantifiers.  This answers a question left open in Lautemann et al.~\cite{lautemann2001descriptive} and allows us to obtain the sought after algebraic characterization.

For the purpose of establishing the characterization of $\NC^1$, it suffices to consider quantifiers applied to interpretations in which tuples of elements are ordered lexicographically.  However, we show that the arity collapse of quantifiers associated with finite monoids holds more generally, for any first-order definable linear orders on tuples.  To show this, we levarage a characterization of first-order interpretations due to Boja\'{n}czyk  et al.~\cite{bojanczyk2019string,bojanczyk2022transducers}.

We begin in Section~\ref{sec:prelim}, by covering relevant background on semigroup theory, typed monoids, and multiplication quantifiers (some of which is relegated to the appendix in the interests of space).  In Section~\ref{sec:mult}, we establish the technical result showing that quantifiers of higher arity over a regular language $L$ can be defined using just unary quantifiers over the syntactic monoid of $L$, when the quantifiers are applied to interpretations with a lexicographic order on tuples.  In  Section~\ref{sec:alg}, we apply this to obtain the algebraic characterization of  $\NC^1$.  finally, in Section~\ref{sec:nonlex}, we establish the arity collapse for quantifiers in the context of more general interpretations.

\section{Preliminaries}\label{sec:prelim}

We assume the reader is familiar with basic concepts of formal language theory, automata theory, complexity theory, and logic.  We quickly review definitions we need in order to fix notation and establish conventions.

We write $\mathbb{Z}$ for the set of integers, $\mathbb{N}$ for the set of natural numbers (including $0$), and $\mathbb{Z}^+$ for the set of positive integers.
We write $[n]$ for the set of integers $\{1,\ldots,n\}$ and $\mathbb{S}$ for the set of \emph{square} integers.  That is, $\mathbb{S} = \{ x \in \mathbb{Z}^+ \mid x = y^2 \text{ for some } y \in \mathbb{Z}\}$.

For a fixed $n \in \mathbb{Z}^+$ and an integer $i \in [n]$, we define the \emph{$n$-bit one-hot encoding} of $i$ to be the binary string $b \in \{0,1\}^n$ such that $b_j = 1$ if, and only if,  $j = i$.  For any set $S$, we write $\wp(S)$ for the powerset of $S$.

The complexity classes $\TC^0$ and $\NC^1$ are classes of languages defined in terms of circuits.  We are only interested in the \emph{uniform} versions of these classes.  Specifically, $\TC^0$ is the class of languages recognized by a uniform family of circuits of polynomial size and constant depth using And, Or, Not and Majority gates of arbitrary fan-in; and $\NC^1$ is the class of languages recognized by a uniform family of circuits of polynomial size and logarithmic depth using And, Or and Not  gates of fan-in at most 2.  Since gates (including majority gates) of unbounded fan-in can be simulated by circuits of logarithmic depth of fan-in 2 using just the standard Boolean basis, we have $\TC^0 \subseteq \NC^1$ and the separation of the two classes is open.  The requirement of \emph{uniformity} here means that the circuits of input size $n$ are easily computed from $n$.  It is standard to take ``easily computed'' to mean $\textsc{DLogTime}$-uniform, though the classes are robust under varying the definition (see~\cite{vollmer1999introduction}).

\subsection{Semigroups, Monoids, and Groups}\label{sec:semigroups}

A \emph{semigroup} $(S, \cdot)$ is a set $S$ equipped with an \emph{associative} binary operation.  We call a semigroup \emph{finite} if $S$ is finite. 
Context permitting, we may refer to a semigroup $(S, \cdot)$ simply by its underlying set $S$.
  
A \emph{monoid} $(M, \cdot)$ is a semigroup with a distinguished element $1_M \in M$ such that for all $m \in M$, $1_M \cdot m = m \cdot 1_M = m$.  We call $1_M$ the \emph{identity} or \emph{neutral} element of $M$.  A \emph{group} $(G, \cdot)$ is a monoid such that for every $g \in G$, there exists an element $g^{-1} \in G$ such that $g \cdot g^{-1} = g^{-1} \cdot g = 1$.  We call $g^{-1}$ the \emph{inverse} of $g$.  

Note that $(\mathbb{Z},+)$ is a group, $(\mathbb{N},+)$ is a monoid but not a group and $(\mathbb{Z}^+,+)$ is a semigroup but not a monoid.  In the first two cases, the identity element is $0$.  When we refer to the monoids $\mathbb{Z}$ or $\mathbb{N}$ we assume that the operation referred to is standard addition.%

For a semigroup $(S, \cdot)$, we say that a set $G \subseteq S$ \emph{generates} $S$ if $S$ is equal to the closure of $G$ under $\cdot$; we denote this by $S = \langle G \rangle_{\cdot}$, or, simply, $\langle G \rangle$ if the operation is clear from context, and call $G$ a \emph{generating set of} $S$. 
We say that $S$ is \emph{finitely generated} if there exists a finite generating set of $S$.  All semigroups we consider are finitely generated.  Note that $\mathbb{Z}^+$ is generated by $\{1\}$, $\mathbb{N}$ by $\{0,1\}$ and $\mathbb{Z}$ by $\{1,-1\}$.

We write $U_1$ for the monoid $(\{0,1\},\cdot)$ where the binary operation is the standard multiplication.  Note that $1$ is the identity element here.
For any set $S$, we denote by $S^+$ the set of non-empty finite strings over $S$ and by $S^*$ the set of all finite strings over $S$.  Equipped with the concatenation operation on strings, which we denote by either $\circ$ or simply juxtaposition, $S^*$ is a monoid and $S^+$ is a semigroup but not a monoid.  We refer to these as the \emph{free monoid} and \emph{free semigroup} over $S$, respectively.  Note that $S$ is a set of generators for $S^+$ and $S \cup \{\epsilon\}$ is a set of generators for $S^*$.

A monoid homomorphism from a monoid $(S, \cdot_S)$ to a monoid $(T, \cdot_T)$ is a function $h : S \rightarrow T$ such that for all $s_1, s_2 \in S$, $h(s_1 \cdot_S s_2) = h(s_1) \cdot_T h(s_2)$ and $h(1_S) = 1_T$.     A \emph{congruence} on a monoid $(M, \cdot)$ is an equivalence relation $\sim$ on $M$ such that for all $a,b,c,d \in M$, if $a \sim b$ and $c \sim d$, then $a \cdot c \sim b \cdot d$.  We denote by $M/{\sim}$ the set of equivalence classes of $\sim$ on $M$.  We denote by $[a]_{\sim}$, or simply $[a]$, the equivalence class of $a \in M$ under $\sim$.  Any congruence $\sim$ gives rise to the \emph{quotient monoid} of $M$ by $\sim$, namely the monoid $(M/{\sim}, \star)$ where for $[a],[b] \in M/{\sim}$, $[a] \star [b] = [a \cdot b]$.  The map $\eta : M \rightarrow M/{\sim}$ defined by $\eta(a) = [a]$ is then a homomorphism, known as the  \emph{canonical homomorphism} of $M$ onto $M/{\sim}$.

\subsection{Language Recognition with Monoids}
\label{sec:backgroundsemi}

Here we recall the definition of syntactic congruences and syntactic monoids.

\begin{definition}\label{def:syncong}
    For a language $L \subseteq \Sigma^*$ over an alphabet $\Sigma$, we define the \emph{syntactic congruence} of $L$ as the equivalence relation $\sim_L$ on $\Sigma^*$ such that for all $x, y \in \Sigma^*$, $x \sim_L y$ if, and only if, for all $w,v \in \Sigma^*$, $wxv \in L$ if, and only if, $wyv \in L$.
\end{definition}

It is easily seen that this relation is a congruence on the free monoid $\Sigma^*$.  The quotient monoid $\Sigma^*/{\sim_L}$ is known as the \emph{syntactic monoid} of $L$.  More generally, we say that a monoid $M$ \emph{recognizes} a language $L$ if there is a homomorphism $h: \Sigma^* \rightarrow M$ and a set $A \subseteq M$ such that $L = h^{-1}(A)$.  It is easily seen that the syntactic monoid of $L$ recognizes $L$.  A language is regular if, and only if, its syntactic monoid is finite~\cite{straubing1994finite}.

\subsection{Logics and Quantifiers}\label{sec:backgroundmult}

We assume familiarity with the basic syntax and semantics of first-order logic.  In this paper, the logic is always interpreted in finite relational structures.  We generally denote structures by Fraktur letters, $\str{A}$, $\str{B}$, etc., and the corresponding universe of the structure is denoted $|\str{A}|$, $|\str{B}|$, etc.
We are almost exclusively interested in \emph{strings} over a finite alphabet.  Thus, fix an alphabet $\Sigma$.  A $\Sigma$-string is then a structure $\str{A}$ whose universe $|\str{A}|$ is linearly ordered by a binary relation $<$ and which interprets a set of unary relation symbols $(R_{\sigma})_{\sigma \in \Sigma}$ such that for each element $a \in |\str{A}|$ there is a unique $\sigma \in \Sigma$ such that $a$ is in the interpretation of $R_{\sigma}$.

More generally, let $\tau$ be any relational vocabulary consisting of a binary relation symbol $<$ and unary relation symbols $R_1,\ldots,R_k$.  We can associate with any $\tau$-structure in which $<$ is a linear order a string over an alphabet of size $2^k$ as formalized in the following definition.
\begin{definition}\label{def:associated}
  For $\tau$ a relational vocabulary consisting of a binary relation symbol $<$ and unary relation symbols $R_1,\ldots,R_k$, and $\str{A}$ a $\tau$-structure with $n$ elements that interprets the symbol $<$ as a linear order of its universe, we define the string $w_{\str{A}}$ \emph{associated with} $\str{A}$ as the string of length $n$ over the alphabet $\Sigma_k = \{0,1\}^k$ of size $2^k$ so that if $a$ is the $i$th element  of $w_{\str{A}}$, then $a$ is the $k$-tuple where $a_j =1$ if, and only if, $R_j$ holds at the $i$th element of $\str{A}$.
\end{definition}

\noindent For example, say $\tau = \{R_1, R_2, R_3\}$ and $\mathfrak{A} = ([4], <^\mathfrak{A}, R_1^{\mathfrak{A}}, R_2^{\mathfrak{A}}, R_3^{\mathfrak{A}})$ where $1 < 2 < 3 < 4$, $R_1^{\mathfrak{A}} = \{1,3\}$, $R_2^{\mathfrak{A}} = \{2,4\}$, and $R_3^{\mathfrak{A}} = \{1,2,3\}$.  Then, the string $w_{\str{A}} = (101)(011)(101)(010)$. 

Using the above, we can associate a language with any isomorphism-closed class of structures over the vocabulary $\tau$.  We formalize this definition for future use.
\begin{definition}\label{def:associated-lang}
  For $\tau$ a relational vocabulary consisting of a binary relation symbol $<$ and unary relation symbols $R_1,\ldots,R_k$, and $\mathcal{A}$ a class of $\tau$-structures where every structure interprets $<$ as a linear order, we define the language $L_{\mathcal{A}}$ over the alphabet $\Sigma_k = \{0,1\}^k$ to be
  $$L_{\mathcal{A}} = \{ w_{\str{A}} \mid \str{A} \in \mathcal{A} \}.$$

  Conversely, for any language $L$ over the alphabet $\Sigma_k$, we define the class of $\tau$-structures $\mathcal{S}_L$ to be
  $$\mathcal{S}_L = \{ \str{A} \mid w_{\str{A}} \in L \}.$$
\end{definition}
It is worth noting that the above operations are not inverses, in the sense that it is not the case that $S_{L_{\mathcal{A}}} = \mathcal{A}$.  This is because while $\mathcal{A}$ is a class of structures using $k$ unary relation symbols, $S_{L_{\mathcal{A}}}$ is in a vocabulary using $2^k$ unary symbols.

As the elements of a string $\str{A}$ are linearly ordered, we can identify them with an initial segment $\{1,\ldots,n\}$ of the positive integers.  In other words, we treat a string with universe $\{1,\ldots,n\}$ and the standard order on these elements as a canonical representative of its isomorphism class.  In addition to the order predicate, we may allow other \emph{numerical predicates} to appear in formulas of our logics.  These are predicates whose meaning is completely determined by the size $n$ of the structure and the ordering of its elements.  In particular, we have ternary predicates $+$ and $\times$ for the partial addition and multiplication functions.

An insight due to Lindstr\"om allows us to define a \emph{quantifier} from any isomorphism-closed class of structures (see~\cite{Ebb85}).  We adopt the terminology of Ebbginhaus and Flum~\cite[Chapter~12]{EF99} and provide a brief overview now.
Consider a relational vocabulary $\tau = \{ R_1,\ldots,R_l\}$, where for each $i$, $R_i$ is a relation symbol of arity $r_i$. For any vocabulary $\sigma$ and positive integer $d$, an \emph{interpretation} of $\tau$ in $\sigma$ of dimension $d$ is a tuple of formulas $I = (\phi_1(\tup{x}_1),\dots, \phi_l(\tup{x}_l))$ of vocabulary $\sigma$ where $\phi_i$ is associated with a tuple $\tup{x}_i$ of variables of length $dr_i$.  Suppose we are given a $\sigma$-structure $\str{A}$ and an assignment $\alpha$ that takes variables to elements of $\str{A}$.  Then let $\phi_i^{\str{A},\alpha}$ denote the relation of arity $dr_i$ consisting of the set of tuples $\{\tup{a} \in |\str{A}|^{dr_i} \mid \str{A} \models \phi_i[\alpha[\tup{x}_i/\tup{a}]] \}$.  Then, the interpretation $I$ defines a map that takes a $\sigma$-structure $\str{A}$, along with an assignment $\alpha$ to the  $\tau$-structure $I(\str{A},\alpha)$ with universe $|\str{A}|^d$ where the interpretation of $R_i$ is the set $\phi_i^{\str{A},\alpha}$, seen as a relation of arity $r_i$ on $|\str{A}|^d$.

Given a class of $\tau$-structures $Q$ and any positive integer $d$, we have a quantifier $Q_d$.  In a logic with $Q_d$, we can form formulas of the form
$$Q_d \tup{x}_1\cdots \tup{x}_l (\phi_1,\ldots,\phi_l)$$
whenever $I = (\phi_1(\tup{x}_1),\dots, \phi_l(\tup{x}_l))$  is an interpretation of dimension $d$.  In this formula, occurrences in the subformula $\phi_i$  of variables among $x_i$ are bound.   The semantics of this quantifier are given by the rule that 
$Q_d \tup{x}_1\cdots \tup{x}_l (\phi_1,\ldots,\phi_l)$ is true in a structure $\str{A}$ under some assignment $\alpha$ of values to the free variables if the $\tau$-structure $I(\str{A},\alpha)$ is in $Q$.
We can understand $Q_d$ as the $d$th vectorization of the quantifier $Q$ (see~\cite[Def.~12.1.6]{EF99}). When $d=1$, we may omit the subscript.  A quantifier of the form $Q_1$ is called \emph{unary}.

The standard first-order quantifiers: $\exists$ and $\forall$ can be seen as special cases of Lindstr\"om quantifiers in a vocabulary with one unary relation $U$.  The existential quantifier consists of all structures $(A,U)$ where $U \subseteq A$ is non-empty and the universal quantifier  consists of all structures $(A,U)$ where $U = A$.


We are particularly interested in interpretations $I$ where both $\sigma$ and $\tau$ are vocabularies of strings.  These are also known in the literature as \emph{string-to-string transducers}. (See \cite{bojanczyk2022transducers} for an example of how transducers may have many representations.) Therefore, $I$ must define an interpretation of not only each $R_i \in \tau$ (using the formula $\phi_i$) but also a $\sigma$-formula $\phi_<$ defining the linear order on the universe of the $\tau$-structure $I(\str{A},\alpha)$.  We are particularly interested in the case where the order defined is  the lexicographic order on $d$-tuples of $|\str{A}|$ induced by the order in $\str{A}$.  This order is easily defined by a (quantifier-free) first-order formula, and we often omit it from the description of $I$.  

When we consider interpretations that define a linear order other than the lexicographic order, we explicitly include the formula defining the order and thus the formula is $Q_d \tup{x}\tup{y} (\phi_<(\tup{x}), \phi_1(\tup{y}),\ldots,\phi_l(\tup{y}))$, where the interpretation is of dimension $d$ and so $|\tup{x}| = 2d$ and $|\tup{y}| = d$.

Finally, we now introduce some notation we use in the rest of the paper for various logics formed by combining particular choices of quantifiers and numerical predicates.
\begin{definition}
  For a set of quantifiers $\mathfrak{Q}$ and numerical predicates $\mathfrak{N}$, we denote by $(\mathfrak{Q})[\mathfrak{N}]$ the logic constructed by extending quantifier-free first-order logic with the quantifiers in $\mathfrak{Q}$ and allowing the numerical predicates in $\mathfrak{N}$.

   We denote by \emph{FO} the set of standard first-order quantifiers: $\{\exists, \forall\}$.
 \end{definition}
 When $\mathfrak{Q}$ is just a singleton $\{Q\}$, we sometimes denote $(\mathfrak{Q})[\mathfrak{N}]$ by $(Q)[\mathfrak{N}]$ or $(Q_1)[\mathfrak{N}]$ to emphasise that it is the unary quantifier.  We also write $\bar{Q}$ for the collection  of all vectorizations of the quantifier $Q$.
 We use similar notation for the sets of numerical predicates.  We use $\mathcal{L}((\mathfrak{Q})[\mathfrak{N}])$ to denote the languages expressible by the logic $(\mathfrak{Q})[\mathfrak{N}]$.  

All the logics we consider are \emph{substitution closed} in the sense of~\cite{Ebb85}.  This means in particular that if a quantifier $Q$ is definable in a logic $(\mathfrak{Q})[\mathfrak{N}]$, then extending the logic with the quantifier $Q$ does not add to its expressive power.  This is because we can replace occurrences of the quantifier $Q$ by its definition, with a suitable substitution of the interpretation for the relation symbols.  Hence, if $Q$ is definable in $(\mathfrak{Q})[\mathfrak{N}]$, then $\mathcal{L}((\mathfrak{Q})[\mathfrak{N}]) = \mathcal{L}((\mathfrak{Q} \cup \{Q\})[\mathfrak{N}])$.

A remark is due on our notation for numerical predicates.  All structures we consider are ordered, including those defining the quantifiers.  Thus the order predicate is implicitly present in the collection of numerical predicates $\mathfrak{N}$ and is used (implicitly) to define the interpretations to a quantifier.  We sometimes write $(\mathfrak{Q})[\varnothing]$ to indicate a logic in which this is the only use of the order that is allowed; by our choice of notation, the order symbol then does not appear explicitly in the syntax of the formulas.

\subsection{Multiplication Quantifiers}

The definition of multiplication quantifiers has its origin in Barrington, Immerman, and Straubing \cite[Section 5]{barrington1990uniformity} where they were referred to as monoid quantifiers; the authors proved that the languages in NC${}^1$ are exactly those expressible by first-order logic with quantifiers whose truth-value is determined via multiplication in a finite monoid.  The notion was extended by Lautemann et al.~\cite{lautemann2001descriptive} to include quantifiers for the word problem over more general algebras with a binary operation.  Multiplication quantifiers over a finite monoid $M$ can be understood as generalized quantifiers corresponding to languages recognized by $M$, and here we define them as such.  We then see how this definition matches that of multiplication quantifiers \textit{\`a la} Barrington et al.~\cite{barrington1990uniformity,krebs2007characterizing}.

Fix a monoid $M$, a set $B \subseteq M$, and a positive integer $k$.  Let $\Sigma_k$ denote the set $\{0,1\}^k$ which we think of as an alphabet of size $2^k$, and fix a function $\gamma: \Sigma_k \rightarrow M$.  We extend $\gamma$ to strings in $\Sigma_k^*$ inductively in the standard way: $\gamma(\epsilon) = 1_M$ and $\gamma(wa) = \gamma(w)\gamma(a)$.  Note that $\gamma$ is a monoid homomorphism.  Together these define a language
$$L^{M,B}_{\gamma} = \{ x \in \Sigma_k^* \mid \gamma(x) \in B\}.$$
We can now define a \emph{multiplication quantifier}.  In the following, $\mathcal{S}_L$ denotes the class of $\tau$ structures associated with a language $L$ in the sense of Definition~\ref{def:associated-lang}.
\begin{definition}\label{def:multquant}
Let $\tau$ be a vocabulary including an order symbol $<$ and $k$ unary relations.
For a monoid $M$, a set $B \subseteq M$ and a function $\gamma: \{0,1\}^k \rightarrow M$, the \emph{multiplication quantifier} $\Gamma_{\gamma}^{M,B}$ is the Lindstr\"om quantifier associated with the class of structures $\mathcal{S}_{L^{M,B}_{\gamma}}$.

We also write $\Gamma_{d,\gamma}^{M,B}$ for the vectorization of this quantifier of dimension $d$. If $B$ is a singleton $\{s\}$, then we may write $\Gamma^{M,s}_{d,\gamma}$ for short. 
\end{definition}

Recall that $U_1$ denotes the two-element monoid $\{0,1\}$ with standard multiplication.  Then, it is easily seen that $\Gamma^{U_1,0}_{1,\gamma}$,  where $\gamma : \{0,1\} \rightarrow U_1$ such that $\gamma(0) = 1$ and $\gamma(1) = 0$, is the standard existential quantifier.  The universal quantifier can be defined similarly.

Another way of describing the semantics of the multiplication quantifier (which relates it directly to the form described in Barrington et al.~\cite{barrington1990uniformity}) is as follows.  For a monoid $M$, a set $B \subseteq M$, a positive integer $k$ and a function $\gamma : \{0,1\}^k \rightarrow M$, consider the formula
$$
    \Gamma^{M,B}_{d,\gamma} \overline{x}\overline{y}(\phi_{<}(\overline{x}),\phi_1(\overline{y}), \dots, \varphi_k(\overline{y}))
    $$
    where $|\overline{x}| = 2d$ and $|\overline{y}| = d$; and  $I = (\phi_{<}(\overline{x}),\phi_1(\overline{y}), \dots, \varphi_k(\overline{y}))$ defines an interpretation of dimension $d$ from some vocabulary $\tau$ to a vocabulary with binary relation $\sqsubset$ and $k$ unary relations.
    Then, for a $\tau$-structure $\mathfrak{A}$ and assignment $\alpha$, we have that
    $$
    \mathfrak{A} \models \Gamma^{M,B}_{d,\gamma} \overline{x}\overline{y}(\phi_{<}(\overline{x}),\phi_1(\overline{y}), \dots, \varphi_k(\overline{y})) [\alpha]
    $$ if, and only if, $\phi_{<}$ defines a linear order $\sqsubset$ on the $d$-tuples of $\mathfrak{A}$ and
 $$
 \prod_{(\overline{a} \in ||\mathfrak{A}||^d)_{\sqsubset}} \gamma(\phi_1(\overline{a}), \dots, \phi_k(\overline{a})) \in B.
 $$
Here, multiplication is in the monoid $M$ and, since multiplication is not necessarily commutative, the order of $d$-tuples is specified to be the one given by $\sqsubset$. 

As stated above, when $\mathfrak{A}$ is itself ordered and the order $\sqsubset$ is the lexicographic order on $d$-tuples, we simply omit the formula $\phi_<$.

\begin{definition}
    For a monoid $M$, we define the following collections of quantifiers: 
    \begin{align*}
        \Gamma^M &= \left\{\Gamma^{M,B}_{d,\gamma} \mid B \subseteq M \text{, } \gamma : \{0,1\}^k \rightarrow M \text{, and } d,k \geq 1\right\}\\
        \Gamma^M_{d} &= \left\{\Gamma^{M,B}_{d,\gamma} \mid B \subseteq M \text{ and } \gamma : \{0,1\}^k \rightarrow M\right\}\\
        \Gamma^M_{\gamma} &= \left\{\Gamma^{M,B}_{d,\gamma} \mid B \subseteq M \text{, and } d \geq 1 \right\}\\
        \Gamma^M_{d,\gamma} &= \left\{\Gamma^{M,B}_{d,\gamma} \mid B \subseteq M\right\}
    \end{align*}
    Finally, let $\Gamma^{\fin}$ be the collection of all multiplication quantifiers over finite monoids.
\end{definition}

The study of the expressive power of multiplication quantifiers has generally been in relation to logics which restrict the application of multiplication quantifiers to interpretations with a lexicographic order.  For this purpose, we introduce a piece of notation: we write $\text{lex-}(\mathfrak{Q}  \cup \Gamma^{M})[\mathfrak{N}]$ to mean the fragment of the logic $(\mathfrak{Q}  \cup \Gamma^{M})[\mathfrak{N}]$ in which all applications of quantifiers from $\Gamma^{M}$ are to interpretations with a lexicographic order.  Likewise, we write $\text{fo-}(\mathfrak{Q}  \cup \Gamma^{M})[\mathfrak{N}]$ to mean the fragment of the logic $(\mathfrak{Q}  \cup \Gamma^{M})[\mathfrak{N}]$ in which all applications of quantifiers from $\Gamma^{M}$ are to interpretations in which the order is defined by a formula of $(\FO)[<]$.  In particular, there are no nested occurrences of the multiplication quantifier in the formula $\varphi_{<}$ defining the order.  Note also that, since the lexicographical order is definable by a first-order formula, $\text{lex-}(\mathfrak{Q}  \cup \Gamma^{M})[\mathfrak{N}]$ is contained in $\text{fo-}(\mathfrak{Q}  \cup \Gamma^{M})[\mathfrak{N}]$.

From \cite[Corollary 9.1]{barrington1990uniformity}, we know that $\NC^1$ is characterized by $(\text{FO})[+,\times]$ equipped with finite multiplication quantifiers over lexicographic orders:

\begin{theorem}[\cite{barrington1990uniformity}]\label{thm:logfornc1}
  $\NC^1 = \mathcal{L}(\emph{\text{lex-}}(\Gamma^{\fin})[+,\times])$.
\end{theorem}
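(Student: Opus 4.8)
The plan is to establish $\NC^1 = \mathcal{L}(\text{lex-}(\Gamma^{\fin})[+,\times])$ via two inclusions, essentially by reconciling the definition of multiplication quantifiers given in Definition~\ref{def:multquant} with the ``monoid quantifiers'' of Barrington, Immerman, and Straubing~\cite{barrington1990uniformity}, whose characterization of $\NC^1$ we are entitled to cite. First I would recall that in \cite[Section~5, Corollary~9.1]{barrington1990uniformity}, $\NC^1$ is characterized as the class of languages definable in first-order logic with arithmetic predicates extended by quantifiers that take a sequence of formulas $\phi_1,\dots,\phi_k$ defining, at each position, a letter of an alphabet encoded as an element of a finite monoid, and accept precisely when the ordered product of these elements lands in a designated accepting set. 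The key observation is that this is exactly the semantics spelled out in the displayed product formula following Definition~\ref{def:multquant}, once we take the order $\sqsubset$ to be the lexicographic order (so $\phi_<$ is omitted) and identify the monoid homomorphism $\gamma : \{0,1\}^k \to M$ with the letter-to-monoid assignment. Thus the two logics are syntactic variants of one another.

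For the inclusion $\mathcal{L}(\text{lex-}(\Gamma^{\fin})[+,\times]) \subseteq \NC^1$, I would argue that any formula in $\text{lex-}(\Gamma^{\fin})[+,\times]$ can be translated into the monoid-quantifier logic of \cite{barrington1990uniformity}: each multiplication quantifier $\Gamma^{M,B}_{d,\gamma}$ with $M$ finite is precisely a Barrington--Immerman--Straubing monoid quantifier over $M$, where the dimension-$d$ interpretation with lexicographic order on $d$-tuples is simulated by iterating over $d$-tuples of the universe in lexicographic order. Since a $d$-tuple over a universe of size $n$ is addressable by $d$ first-order variables and the lexicographic order on such tuples is first-order definable using $<$, the vectorized quantifier $\Gamma^{M,B}_{d,\gamma}$ reduces to the unary-dimension monoid quantifier over the interpreted structure; the arithmetic predicates $+,\times$ are available on both sides. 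Because the logics are substitution closed, this translation preserves the defined language, so every language in $\mathcal{L}(\text{lex-}(\Gamma^{\fin})[+,\times])$ lies in the class characterized by \cite{barrington1990uniformity}, namely $\NC^1$.

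For the reverse inclusion $\NC^1 \subseteq \mathcal{L}(\text{lex-}(\Gamma^{\fin})[+,\times])$, I would run the same correspondence in the other direction: a monoid quantifier of \cite{barrington1990uniformity} over a finite monoid $M$ computes the product $\prod_i \gamma(\text{letter at position } i)$ over positions $1,\dots,n$ in the linear order, which is exactly $\Gamma^{M,B}_{1,\gamma}$ applied to the string, a dimension-$1$ (indeed lexicographic, trivially) instance. Here I must be careful to check that the alphabet encoding used in \cite{barrington1990uniformity}---where the formulas $\phi_1,\dots,\phi_k$ select a monoid element at each position---matches our encoding via a homomorphism $\gamma : \{0,1\}^k \to M$ and the associated structures $\mathcal{S}_{L^{M,B}_\gamma}$ of Definition~\ref{def:associated-lang}. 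Any language in $\NC^1$ is defined by such a formula, which then directly yields a formula of $\text{lex-}(\Gamma^{\fin})[+,\times]$ defining the same language.

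The main obstacle I anticipate is not mathematical depth but bookkeeping fidelity: verifying that the two definitions of multiplication quantifier genuinely coincide under the chosen encodings---in particular that the ordering of factors in the product (governed by $\sqsubset$, here lexicographic) in our Definition~\ref{def:multquant} matches the positional order used implicitly in \cite{barrington1990uniformity}, and that the treatment of the alphabet $\Sigma_k = \{0,1\}^k$ and the function $\gamma$ correctly captures the general ``select-a-letter'' semantics of the original monoid quantifiers. Since noncommutativity of $M$ makes the factor order essential, the argument hinges on confirming that lexicographic order on $1$-tuples is just the order on positions, so the dimension-$1$ case is faithful, and that the higher-dimension lexicographic cases add nothing beyond what first-order definable orders on tuples already provide. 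Once this dictionary is pinned down, the theorem follows immediately from \cite[Corollary~9.1]{barrington1990uniformity} by a syntactic reinterpretation, with no further combinatorial work required.
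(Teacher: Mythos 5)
The paper offers no proof of this statement: it is quoted verbatim as background, attributed to \cite[Corollary~9.1]{barrington1990uniformity}, so your plan of reconciling Definition~\ref{def:multquant} with the Barrington--Immerman--Straubing monoid quantifiers and then invoking their result is exactly the paper's (implicit) treatment, and the bookkeeping you identify --- matching the factor order $\sqsubset$ with positional order and the alphabet encoding via $\gamma$ --- is indeed all that is at stake. One caution on your forward inclusion, though: the claim that ``the vectorized quantifier $\Gamma^{M,B}_{d,\gamma}$ reduces to the unary-dimension monoid quantifier'' is, read as a definability statement, precisely the arity-collapse result that this paper only establishes later (Theorem~\ref{thm:finitebinding}, via the nontrivial nesting argument of Lemma~\ref{lem:nesting}); using it to justify a background theorem stated before that machinery would be circular. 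It is also unnecessary for the upper bound: to show $\mathcal{L}(\text{lex-}(\Gamma^{\fin})[+,\times]) \subseteq \NC^1$ you should instead argue directly that the ordered product of the $n^d$ monoid elements produced by the interpretation is computed by a balanced binary multiplication tree of depth $O(d\log n)$ over a fixed finite monoid, with the lexicographic enumeration of $d$-tuples and the evaluation of the $\phi_i$ handled by the $\FO[+,\times]$ uniformity machinery of \cite{barrington1990uniformity}. With that substitution your argument is sound and coincides with the intended reading of the citation.
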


\begin{remark}\label{thm:s5logfornc1}
    In fact, simply adding multiplication quantifiers for some fixed finite, non-solvable monoid to $(\FO)[+,\times]$ suffices.  The definition of ``non-solvable monoid'' is not needed for our proofs here but, for example, the \emph{symmetric group of degree five}, denoted $S_5$, is a non-solvable monoid. Therefore, we know that $\NC^1 = \mathcal{L}({\text{lex-}}(\FO\cup\Gamma^{S_5})[+,\times])$.
\end{remark}

In the absence of the arithmetic predicates for addition and multiplication, the logic of multiplication quantifiers over finite monoids only allows us to define regular languages.  Specifically, Barrington et al.~\cite[Theorem 11.1]{barrington1990uniformity} established that the regular languages are characterized by the logic using such quantifiers with only unary interpretations.
\begin{theorem}[\cite{barrington1990uniformity}]\label{thm:fologforreg}
    \emph{\textsc{Reg}} $= \mathcal{L}(\emph{\text{lex-}}(\Gamma^{\fin}_1)[<])$.
\end{theorem}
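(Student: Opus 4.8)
The plan is to prove the two inclusions $\textsc{Reg} \subseteq \mathcal{L}(\text{lex-}(\Gamma^{\fin}_1)[<])$ and $\mathcal{L}(\text{lex-}(\Gamma^{\fin}_1)[<]) \subseteq \textsc{Reg}$ separately. For the forward inclusion, I would take an arbitrary regular language $L \subseteq \Sigma_k^*$ and use the fact, recalled in Section~\ref{sec:backgroundsemi}, that its syntactic monoid $M = \Sigma_k^*/{\sim_L}$ is finite. Let $\eta : \Sigma_k^* \to M$ be the canonical homomorphism and let $B = \eta(L) \subseteq M$ be the image of $L$, so that $L = \eta^{-1}(B)$. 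Since $\eta$ is determined by its values on single letters, setting $\gamma = \eta \restriction \Sigma_k$ gives a function $\gamma : \{0,1\}^k \to M$ whose inductive extension to strings coincides with $\eta$; hence $L = L^{M,B}_{\gamma}$. The single unary multiplication quantifier $\Gamma^{M,B}_{1,\gamma}$, applied to the trivial one-dimensional interpretation that simply copies each position's relations $R_1,\dots,R_k$ with the identity order, then evaluates the product $\prod_{a} \gamma(R_1(a),\dots,R_k(a))$ over the positions of the input in order and tests membership in $B$. This defines exactly $L$, and since the interpretation has dimension one with the (trivial) lexicographic order, the defining formula lies in $\text{lex-}(\Gamma^{\fin}_1)[<]$.

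For the converse inclusion I would argue by structural induction on formulas of $\text{lex-}(\Gamma^{\fin}_1)[<]$, showing that every formula defines a regular language, equivalently that its syntactic monoid is finite. The base cases (atomic formulas built from $<$ and the relations $R_\sigma$) and the Boolean connectives are routine: regular languages are closed under the Boolean operations, and an atomic formula over a single variable defines a regular language. The interesting case is a top-level application of a unary multiplication quantifier $\Gamma^{M,B}_{1,\gamma}$ to unary interpretations $\phi_1(x),\dots,\phi_k(x)$ over the input string, where each $\phi_i$ defines a regular language by the induction hypothesis. Because the interpretation is one-dimensional, each $\phi_i(x)$ is evaluated position-by-position, and the composite language is obtained by relabelling each position of the input according to the tuple $(\phi_1,\dots,\phi_k)$ and then applying a fixed regular (monoid-recognized) test. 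Since $M$ is finite and the relabelling is itself first-order (indeed regular), closure of the regular languages under this operation yields a regular language.

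The main obstacle, and the point requiring the most care, is handling \emph{nesting} of unary multiplication quantifiers, since $\phi_i$ may itself contain such quantifiers and may have free variables beyond the bound one. To make the induction go through cleanly, I would strengthen the hypothesis to speak of formulas with free first-order variables: a formula $\psi(x_1,\dots,x_m)$ should define a regular language over the alphabet $\Sigma_k \times \{0,1\}^m$ in which the extra coordinates mark the positions assigned to $x_1,\dots,x_m$ (the standard encoding of free variables as additional marking predicates). Under this formulation, evaluating a unary quantifier $\Gamma^{M,B}_{1,\gamma} x\,(\phi_1,\dots,\phi_k)$ amounts to, at each position, reading off the truth values of the regular relations $\phi_i$ with $x$ bound to that position, and these truth values are computed by a finite-state device running over the marked input; composing this with multiplication in the finite monoid $M$ keeps the whole construction finite-state. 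The essential fact being invoked throughout is that dimension-one interpretations preserve regularity, so that no infinite blow-up in the syntactic monoid can occur; this is exactly why the arity restriction to unary quantifiers is what confines us to the regular languages, in contrast to Theorem~\ref{thm:logfornc1}.
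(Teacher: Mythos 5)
Your proposal is correct, but note that the paper offers no proof of this statement at all: Theorem~\ref{thm:fologforreg} is imported verbatim from Barrington, Immerman and Straubing \cite{barrington1990uniformity} (their Theorem~11.1), so there is no in-paper argument to compare against. What you have reconstructed is essentially the standard automata-theoretic proof that the cited source relies on. The forward inclusion is exactly right: for regular $L \subseteq \Sigma_k^*$ with finite syntactic monoid $M$ and canonical homomorphism $\eta$, taking $\gamma = \eta\restriction\Sigma_k$ and $B = \eta(L)$ gives $L = L^{M,B}_{\gamma}$, and the quantifier applied to the identity interpretation $(R_1(x),\dots,R_k(x))$ defines $L$. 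For the converse, you correctly identify the one place where care is needed --- free variables in nested quantifiers --- and the marked-alphabet strengthening of the induction hypothesis is the right fix. The only step you leave implicit is why the position-wise relabelling induced by $\Gamma^{M,B}_{1,\gamma}x\,(\phi_1,\dots,\phi_k)$ preserves regularity when the truth of each $\phi_i$ at a position depends on the \emph{whole} marked word, not just a prefix; this requires the standard ``regular lookaround'' argument (the truth value of $\phi_i$ at position $a$ is determined by the classes of the prefix and suffix around $a$ under a finite-index congruence), but that is a routine closure property and not a gap. It is worth observing that your proof deliberately stays within dimension one, which is why it lands in $\textsc{Reg}$; the paper's own technical contributions (Lemma~\ref{lem:finitetuple}, Lemma~\ref{lem:nesting}, Theorem~\ref{thm:finitebinding}) concern collapsing \emph{higher}-dimensional quantifiers to unary ones and are orthogonal to what you prove here.
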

\noindent Later, Lautemann et al.~\cite[Theorem 5.1]{lautemann2001descriptive} showed that allowing interpretations of higher dimension to the quantifiers does not increase the expressive power when order is the only numerical predicate.
\begin{theorem}[\cite{lautemann2001descriptive}]\label{thm:unaryfologforreg}
    \emph{\textsc{Reg}} $= \mathcal{L}(\emph{\text{lex-}}(\Gamma^{\fin})[<])$.
  \end{theorem}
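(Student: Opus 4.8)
The plan is to derive the result from Theorem~\ref{thm:fologforreg}, which already identifies $\textsc{Reg}$ with $\mathcal{L}(\text{lex-}(\Gamma^{\fin}_1)[<])$, the fragment using only \emph{unary} multiplication quantifiers. One inclusion is then immediate: since a unary quantifier is just a dimension-one multiplication quantifier, we have $\mathcal{L}(\text{lex-}(\Gamma^{\fin}_1)[<]) \subseteq \mathcal{L}(\text{lex-}(\Gamma^{\fin})[<])$, and hence $\textsc{Reg} \subseteq \mathcal{L}(\text{lex-}(\Gamma^{\fin})[<])$. The substance of the proof is the reverse inclusion, for which I would show that every formula of $\text{lex-}(\Gamma^{\fin})[<]$ is equivalent, over finite strings, to one of $\text{lex-}(\Gamma^{\fin}_1)[<]$; combined with substitution closure and Theorem~\ref{thm:fologforreg}, this yields $\mathcal{L}(\text{lex-}(\Gamma^{\fin})[<]) \subseteq \textsc{Reg}$.

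I would prove this \emph{arity collapse} by induction on the structure of the formula, the only nontrivial case being a multiplication quantifier $\psi(\tup{x}) = \Gamma^{M,B}_{d,\gamma}\,\tup{y}\,(\phi_1,\ldots,\phi_k)$ of dimension $d > 1$ applied to a lexicographic order. By the induction hypothesis I may assume the interpretation formulas $\phi_i(\tup{x},\tup{y})$ already lie in $\text{lex-}(\Gamma^{\fin}_1)[<]$, so it remains only to reduce the dimension of this single quantifier. The key observation is that the lexicographic order on $d$-tuples factors as an outer loop over the first coordinate followed by an inner lexicographic loop over the remaining $d-1$ coordinates. Writing $\tup{y} = (y_1, \tup{y}')$, the monoid product over all $d$-tuples therefore factors as $\prod_{z}\big(\prod_{\tup{a}'} \gamma(\ldots)\big)$, where for each fixed value $z$ of the first coordinate the inner product $Q(z) \in M$ is computed by a dimension-$(d-1)$ quantifier with $z$ as a free parameter.

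Concretely, for each $m \in M$ the property $Q(z) = m$ is expressed by $\chi_m(\tup{x},z) := \Gamma^{M,\{m\}}_{d-1,\gamma}\,\tup{y}'\,(\phi_1(\tup{x},z,\tup{y}'),\ldots,\phi_k(\tup{x},z,\tup{y}'))$, a quantifier of dimension $d-1$. Since exactly one $\chi_m$ holds at each $z$, the family $(\chi_m)_{m \in M}$ partitions the universe, so the string whose letter at $z$ is the one-hot encoding of the unique $m$ with $\chi_m(\tup{x},z)$ is well defined. Feeding this into an outer \emph{unary} quantifier $\Gamma^{M,B}_{1,\delta}$, where $\delta$ decodes the one-hot encoding back to the corresponding element of $M$, reproduces exactly the product $\prod_z Q(z)$ and its membership test in $B$; hence $\psi(\tup{x})$ is equivalent to this unary quantifier applied to the formulas $\chi_m$. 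An inner induction on $d$ (with base case $d = 1$ trivial) collapses each $\chi_m$ to $\text{lex-}(\Gamma^{\fin}_1)[<]$, and substitution closure then places the whole of $\psi$ in $\text{lex-}(\Gamma^{\fin}_1)[<]$.

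The step I expect to require the most care is the flattening of the lexicographic product and the verification that the inner value $Q(z)$ can be read off as a single letter of a finite alphabet: this is exactly where the hypothesis that the order is lexicographic is used, since it is what allows the first coordinate to be peeled off as an independent outer product. In contrast with the general orders treated in Section~\ref{sec:nonlex}, no such decomposition is available for an arbitrary first-order definable order, which is why the lexicographic case admits this direct inductive argument. Secondary bookkeeping points are the careful threading of the free variables $\tup{x}$ (and the newly freed $z$) through the induction, and the appeal to substitution closure to absorb the nested unary quantifiers $\chi_m$ without leaving the fragment.
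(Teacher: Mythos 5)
Your proposal is correct and follows essentially the same route as the paper: the paper derives this theorem from Theorem~\ref{thm:finitebinding}, whose proof (via Lemmas~\ref{lem:finitetuple} and~\ref{lem:nesting}) is exactly your argument --- peel off the first coordinate of the lexicographic order, compute the inner dimension-$(d-1)$ product as a partition of formulas $\Gamma^{M,\{m\}}_{d-1,\gamma}$ indexed by $m \in M$, and feed the resulting one-hot encoding into an outer unary quantifier with a decoding map $\delta$, inducting on $d$. The only cosmetic difference is that the paper first normalizes $\gamma$ to the fixed one-hot function $\delta$ (Lemma~\ref{lem:finitetuple}) and then performs the nesting, whereas you merge the two steps.
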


In Theorem~\ref{thm:finitebinding} we show that this is true even in the presence of other numerical predicates and, therefore, $\Gamma^{\fin}$ can be replaced by $\Gamma^{\fin}_1$ even in Theorem~\ref{thm:logfornc1}.  For our intended application, we need this technical result only in the case when the quantifier in $\Gamma^{\fin}$ is applied to interpretations where the order defined is lexicographic, and we first give the proof in this special case.  However, the result can be extended to cases in which the quantifier in $\Gamma^\fin$ is applied to interpretations where the order is defined by a $(\FO)[<]$-formula, and this may be of independent interest so we state the more general Theorem~\ref{thm:morethanlex}.

\subsection{Typed Monoids}\label{sec:backgroundtyped1}

Our results build on the theory of \emph{typed monoids} as developed in the work of Krebs and collaborators~\cite{behle2007linear,behle2011typed,krebs2007characterizing,krebs2008typed,cano2021positive}.  We recapitulate the main definitions and results from~\cite{behle2011typed,krebs2008typed} on typed monoids, their relationship to languages, and corresponding characterizations of complexity classes here.

A typed monoid is a monoid equipped with a collection of \emph{types}, which form a Boolean algebra, and a set of \emph{units}.  We only deal with concrete Boolean algebras, given as collections of subsets of a fixed universe: a \emph{Boolean algebra} over a set $S$ is a set $B \subseteq \wp(S)$ such that $\varnothing, S \in B$ and $B$ is closed under union, intersection, and complementation. If $B$ is finite, we call it a \emph{finite} Boolean algebra. We call $\varnothing$ and $S$ the \emph{trivial elements} (or in some contexts, the \emph{trivial types}) of $B$.

A homomorphism between Boolean algebras is defined as standard.  That is, if $B_1$ and $B_2$ are Boolean algebras over sets $S$ and $T$, respectively, then we call $h : B_1 \rightarrow B_2$ a \emph{homomorphism} if $h(\varnothing) = \varnothing$, $h(S) = T$, and for all $s_1, s_2 \in B_1$, $h(s_1 \cap s_2) = h(s_1) \cap h(s_2)$, $h(s_1 \cup s_2) = h(s_1) \cup h(s_2)$, and $h(s^C) = (h(s))^C$.
Now we are ready to define typed monoids.

Let $M$ be a monoid, $G$ a Boolean algebra over $M$, and $E$ a finite subset of $M$. We call the tuple $T = (M, G, E)$ a \emph{typed monoid over $M$} and the elements of $G$ \emph{types} and the elements of $E$ \emph{units}. We call $M$ the \emph{base monoid} of $T$. If $M$ is a group, then we may also call $T$ a \emph{typed group}. When $G = \{\varnothing, A, M - A, M\}$ for some $A \subseteq M$, we abbreviate $T$ as $(M, A, E)$, i.e., the Boolean algebra is signified by an element, or elements, which generates it---in this case, $A$.  We say that a typed monoid $(M,G,E)$ is finite if $M$ is.

We also need a notion of morphism between typed monoids. 
    A \emph{typed monoid homomorphism} $h$ from $(S, G, E)$ to  $(T, H, F)$ is a triple $(h_1, h_2, h_3)$ where $h_1 : S \rightarrow T$ is a monoid homomorphism, $h_2 : G \rightarrow H$ is a homomorphism of Boolean algebras, and $h_3 : E \rightarrow F$ is a mapping of sets such that the following conditions hold:
    \begin{enumerate}[\ \ \ \ (i)]
        \item For all $A \in G$, $h_1(A) = h_2(A) \cap h_1(S)$.
        \item For all $e \in E$, $h_1(e) = h_3(e)$.
    \end{enumerate} 

Note that $h_3$ is redundant in the definition as it is completely determined by $h_1$.  We retain it as part of the definition for consistency with~\cite{behle2011typed,krebs2008typed}.

To motivate the definitions, recall that a language $L \subseteq \Sigma^*$ is recognized by a monoid $M$ if there is a homomorphism $h: \Sigma^* \rightarrow M$ and a set $B \subseteq M$ such that $L = h^{-1}(B)$.  When the monoid $M$ is infinite, the languages recognized form a rather rich collection and we aim to restrict this in two ways.  First, $B$ cannot be an arbitrary set but must be an element of the algebra of types.  Secondly, the homomorphism $h$ must map the letters in $\Sigma$ to units of the typed monoid. Formally, we write that a typed monoid $T = (M, G, E)$ \emph{recognizes} a language $L \subseteq \Sigma^*$ if there exists a typed monoid homomorphism from $(\Sigma^*, L, \Sigma)$ to $T$. We let $\mathcal{L}(T)$ denote the set of languages recognized by $T$. When the base monoid of a typed monoid is finite, we recover the classical definition of recognition.  Hence, the languages recognized by finite typed monoids are necessarily regular.

We can now state the definitions of the key relationships between typed monoids. Let $(S, G, E)$ and $(T, H, F)$ be typed monoids.
    \begin{itemize}
        \item A typed monoid homomorphism $h = (h_1, h_2, h_3) : (S, G, E) \rightarrow (T, H, F)$ is \emph{injective} (\emph{surjective}, or \emph{bijective}) if all of $h_1$, $h_2$, and $h_3$ are.
        \item $(S, G, E)$ is a \emph{typed submonoid} (or, simply, ``submonoid'' when the meaning is obvious from context) of $(T, H, F)$, denoted $(S, G, E) \leq (T, H, F)$, if there exists an injective typed monoid homomorphism $h : (S, G, E) \rightarrow (T, H, F)$.
        \item $(S, G, E)$ divides $(T, H, F)$, denoted $(S, G, E) \preceq (T, H, F)$, if there exists a surjective  typed monoid homomorphism from a submonoid of $(T, H, F)$ to $(S, G, E)$.
    \end{itemize}

\noindent These have the expected properties. Let $T_1$, $T_2$, and $T_3$ be typed monoids:
    \begin{itemize}
        \item Typed monoid homomorphisms are closed under composition.
        \item Division is transitive: if $T_1 \preceq T_2$ and $T_2 \preceq T_3$, then $T_1 \preceq T_3$.
        \item If $T_1 \preceq T_2$, then $\mathcal{L}(T_1) \subseteq \mathcal{L}(T_2)$.
    \end{itemize}

We can formulate the notion of the \emph{syntactic typed monoid} of a language $L$ as an extension of the syntactic monoid of $L$ with a minimal collection of types and units necessary. Let $T = (M, G, E)$ be a typed monoid. A congruence $\sim$ over $M$ is a \emph{typed congruence over $T$} if for every $A \in G$ and $s_1,s_2\in M$, if $s_1 \sim s_2$ and $s_1 \in A$, then $s_2 \in A$. For a typed congruence $\sim$ over $T$, let
    \begin{align*}
        A/{\sim} &= \{[x]_\sim \mid x \in A\} \text{ where $A \subseteq M$}\\
        G/{\sim} &= \{A/{\sim} \mid A \in G\}\\
        E/{\sim} &= \{[x]_\sim \mid x \in E\}.
    \end{align*}
    \noindent Then, $T/{\sim} := (M/{\sim}, G/{\sim}, E/{\sim})$ is the \emph{typed quotient monoid of $T$ by $\sim$}.

Let $\sim_T$ denote the typed congruence on $T$ such that for $s_1,s_2 \in S$, $s_1 \sim_T s_2$ if, and only if, for all $x,y \in S$ and $A \in G$, $xs_1y \in A$ if, and only if,  $xs_2y \in A$. We then refer to the quotient monoid $T/{\sim_T}$ as the \emph{minimal reduced monoid of $T$}.

Recall that ${\sim_L}$ is the syntactic congruence of $L$, defined in Definition \ref{def:syncong}. For a language $L \subseteq \Sigma^*$, the \emph{syntactic typed monoid of $L$}, denoted $\syn(L)$, is the typed monoid $(\Sigma^*, L, \Sigma)/{\sim_L}$. 

    We also get the \emph{canonical typed monoid homomorphism}, $\eta_L : (\Sigma^*, L, \Sigma) \rightarrow \syn(L)$ induced by the syntactic homomorphism of $L$.

It also turns out that we can give a purely structural characterization of those typed monoids that are syntactic monoids.

\begin{proposition}[\cite{krebs2008typed}]
    A typed monoid is the syntactic monoid of a language if, and only if, it is reduced, generated by its units, and has four or two types.

In case it has just two types, then it only recognizes the empty language or the language of all strings.
\end{proposition}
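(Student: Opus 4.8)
The plan is to prove both directions of the biconditional, using the number of types as the organizing case distinction. Throughout I read ``$T = (M,G,E)$ is reduced'' as the statement that the typed congruence $\sim_T$ is the identity relation, equivalently that the canonical map $T \to T/{\sim_T}$ is an isomorphism. I would repeatedly use that $\sim_T$ respects types: since it is a typed congruence, if $s_1 \sim_T s_2$ and $s_1 \in A$ for some $A \in G$, then $s_2 \in A$ (take $x = y = 1_M$ in the defining condition). The structural lemma I would isolate first is that for any typed monoid $T_0$ the quotient $T_0/{\sim_{T_0}}$ is reduced. This follows because, for $A$ in the algebra of $T_0$, the relation $[w] \in A/{\sim_{T_0}}$ holds if and only if $w \in A$ (again because $\sim_{T_0}$ respects types), so the induced congruence on the quotient computes exactly $\sim_{T_0}$ and is therefore trivial.

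For the forward direction, suppose $T = \syn(L) = (\Sigma^*, L, \Sigma)/{\sim_L}$. First I would observe that on the typed monoid $T_0 = (\Sigma^*, L, \Sigma)$, whose algebra is $\{\varnothing, L, \Sigma^* \setminus L, \Sigma^*\}$, the typed congruence $\sim_{T_0}$ coincides with the syntactic congruence $\sim_L$: the trivial types impose no condition, and the type $L$ (equivalently its complement) yields precisely the defining condition of $\sim_L$. Hence $\syn(L) = T_0/{\sim_{T_0}}$ is reduced by the lemma above. It is generated by its units because the units of $T_0$ are $\Sigma$, which generate $\Sigma^*$, and the images of a generating set generate the quotient. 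Finally, the algebra $\{\varnothing, L, \Sigma^* \setminus L, \Sigma^*\}$ has four elements when $L$ is non-trivial and two when $L \in \{\varnothing, \Sigma^*\}$; since $\sim_L$ respects $L$, it does not merge $L/{\sim_L}$ with its complement, so the quotient has the same number of types, namely four or two.

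For the backward direction, suppose $T = (M,G,E)$ is reduced, generated by its units, and has four types, so $G = \{\varnothing, A, M \setminus A, M\}$ for a proper non-empty $A$. I would take $\Sigma$ to be a copy of the unit set $E$ and let $h : \Sigma^* \to M$ be the monoid homomorphism extending the inclusion $E \hookrightarrow M$; this is surjective because $E$ generates $M$. Setting $L = h^{-1}(A)$, the crux is to show $\sim_L$ equals the kernel congruence of $h$. Unwinding definitions, $x \sim_L y$ holds iff $h(u)h(x)h(v) \in A \Leftrightarrow h(u)h(y)h(v) \in A$ for all $u,v \in \Sigma^*$; by surjectivity of $h$ the factors $h(u), h(v)$ range over all of $M$, so this is exactly the condition $h(x) \sim_T h(y)$. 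Since $T$ is reduced, $\sim_T$ is the identity, whence $x \sim_L y \Leftrightarrow h(x) = h(y)$. The isomorphism theorem then identifies $\Sigma^*/{\sim_L}$ with $M$ via $[x] \mapsto h(x)$, under which $L/{\sim_L} \mapsto A$, its complement maps to $M \setminus A$, and the units $\Sigma/{\sim_L}$ map to $E$; thus $\syn(L) \cong T$.

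It remains to treat two types and prove the addendum. If the algebra has only two elements then $G = \{\varnothing, M\}$, and the defining condition of $\sim_T$ is vacuously satisfied for every pair (everything lies in $M$, nothing in $\varnothing$); reducedness therefore forces $M$ to be the trivial monoid, and such a $T$ is realized as $\syn(\varnothing)$ or $\syn(\Sigma^*)$. For the recognition claim, a typed homomorphism from $(\Gamma^*, L', \Gamma)$ into $T$ must send the accepting type $L'$ to an element $B \in \{\varnothing, M\}$, and condition (i) applied to $L'$ and to its complement forces $L' = h_1^{-1}(B)$; since $M$ is trivial this gives $L' = \varnothing$ when $B = \varnothing$ and $L' = \Gamma^*$ when $B = M$. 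I expect the main obstacle to be the identification $\sim_L = \ker h$ in the backward direction, where reducedness and generation by units must both be invoked and combined carefully; the lemma that $T_0/{\sim_{T_0}}$ is reduced is routine but indispensable, and the degenerate two-type case must not be overlooked.
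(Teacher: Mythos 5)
The paper does not actually prove this proposition; it is imported verbatim from Krebs's work \cite{krebs2008typed} and stated without proof, so there is no in-paper argument to compare against. Judged on its own, your proof is correct and is essentially the standard argument: the key identifications you make --- that $\sim_{T_0}$ on $(\Sigma^*,L,\Sigma)$ is exactly the syntactic congruence $\sim_L$, that quotienting by $\sim_{T_0}$ always yields a reduced typed monoid, and that in the converse direction reducedness plus surjectivity of $h$ (from generation by units) force $\sim_L = \ker h$ --- are precisely the right ingredients, and your handling of the degenerate two-type case and of the recognition addendum via condition (i) applied to both $L'$ and its complement is sound. The only point worth flagging is a convention mismatch with Section~\ref{sec:semigroups}: there, ``generates'' means closure under the binary operation alone, under which $\Sigma/{\sim_L}$ generates only the image of $\Sigma^+$ and need not produce $[\epsilon]$ (e.g.\ for $L=\{a\}$ over $\{a\}$). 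For the proposition to hold as stated, ``generated by its units'' must be read as monoid generation (the identity included for free), which is what both you and Krebs implicitly assume; it would be worth saying so explicitly.
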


\subsection{Typed Monoids and Multiplication Quantifiers}\label{sec:backgroundtyped2}
Here, we turn to discussing the relationship between the expressive power of logics with multiplication quantifiers and typed monoids.  A formal association is defined through the definition below.

\begin{definition}\label{def:origtypedquantsemi}
    For a multiplication quantifier $Q=\Gamma^{M,B}_{\gamma}$ where $\gamma : \{0,1\}^k \rightarrow M$, we define the \emph{typed quantifier monoid} of $Q$ to be the syntactic typed monoid of the language $L^{M,B}_{\gamma}$.  
\end{definition}

We also state the formal connection between the languages expressible in a logic with a collection of quantifiers and the corresponding class of typed monoids which recognize the languages.  For this we need the notion of the \emph{ordered strong block product} of a pair of monoids.  The definition is technical and can be found in Krebs~\cite{krebs2008typed} and is also reproduced in Section~\ref{sec:bp}.  For a set of typed monoids $T$, we denote by  $\mathrm{sbpc}_<(T)$ its closure under ordered strong block products.

From Krebs \cite[Theorem 4.14]{krebs2008typed}, we then get the following relationship between logics and algebras:\footnote{The theorem in \cite{krebs2008typed} is actually more general as it allows for more predicates than just order; however, for our purposes, order alone suffices.}
\begin{theorem}\label{thm:logcircalgequiv}
     Let $\mathfrak{Q}$ be a collection of multiplication quantifiers and $\bm{Q}$ the set of typed quantifier monoids for quantifiers in $\mathfrak{Q}$. Then, \(
        \mathcal{L}(\emph{\text{lex-}}(\mathfrak{Q}_1)[<]) = \mathcal{L}(\emph{sbpc}_<(\bm{Q}))
    \). 
  \end{theorem}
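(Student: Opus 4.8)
The plan is to prove the two inclusions separately, in each case by an induction that tracks the correspondence between the nesting of unary quantifiers in a formula and the iterated application of the ordered strong block product to typed monoids. The crux is a \emph{block product principle}: applying a single unary quantifier $Q$ on top of interpretations whose component formulas define languages recognized by a typed monoid $T$ yields a language recognized by the ordered strong block product of the typed quantifier monoid of $Q$ with $T$, and conversely every language recognized by such a block product arises in this way. Because the logics here are substitution closed, once a quantifier is shown definable the class $\mathcal{L}(\cdot)$ of expressible languages is unchanged by adding it; this lets us pass freely between a quantifier and its defining formula throughout the induction. I would also use repeatedly that $\mathcal{L}(\cdot)$ is monotone under division, so that it suffices to reason about the explicit block products and then appeal to $\preceq$ to land inside the closure.

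For the inclusion $\mathcal{L}(\text{lex-}(\mathfrak{Q}_1)[<]) \subseteq \mathcal{L}(\mathrm{sbpc}_<(\bm{Q}))$, I would induct on the structure of a formula $\psi$. In the base case $\psi$ is quantifier-free, i.e.\ a Boolean combination of order atoms and letter atoms $R_\sigma(x)$; this merely selects positions by their letter (and the parameter), so the associated transduction is a letter-to-letter relabelling recognized by a trivial typed monoid that sits inside the closure. For the inductive step, write $\psi = Q\,\tup{x}\,(\phi_1,\dots,\phi_l)$ with $Q \in \mathfrak{Q}_1$ applied to a lexicographic interpretation of dimension one, where each $\phi_i(x)$ is a formula of the logic (possibly with its own nested quantifiers, evaluated with $x$ as a parameter). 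By the induction hypothesis the tuple $(\phi_1,\dots,\phi_l)$ is computed, position by position, by a transduction into the alphabet $\Sigma_k$ of $Q$ that is recognized by some $T \in \mathrm{sbpc}_<(\bm{Q})$. The typed quantifier monoid $T_Q$ of $Q$ is, by Definition~\ref{def:origtypedquantsemi}, a member of $\bm{Q}$. The block product principle then identifies the language defined by $\psi$ as one recognized by the ordered strong block product of $T_Q$ with $T$, which lies in $\mathrm{sbpc}_<(\bm{Q})$ by definition of the closure.

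For the reverse inclusion I would induct on the number of block products used to construct $T \in \mathrm{sbpc}_<(\bm{Q})$. A generator $T_Q \in \bm{Q}$ is the syntactic typed monoid of the language $L^{M,B}_{\gamma}$; any language it recognizes is obtained by relabelling letters and choosing an accepting type, and is therefore defined by a single application of the unary quantifier $\Gamma^{M,B}_{1,\gamma}$ to atomic interpretations, hence expressible in $\text{lex-}(\mathfrak{Q}_1)[<]$. For the inductive step, the block product principle run in reverse expresses a language recognized by the ordered strong block product of $T_1$ and $T_2$ by substituting the formula defining the inner (context-dependent) behaviour of $T_2$ into the quantifier associated with $T_1$; both inner formulas are in the logic by the induction hypothesis, and since the logic is substitution closed the substituted formula is again in $\text{lex-}(\mathfrak{Q}_1)[<]$.

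The main obstacle is establishing the block product principle in precisely this \emph{ordered, strong} setting with the \emph{lexicographic} order on tuples. Unwinding the definition of $\mathrm{sbpc}_<$ (reproduced in Section~\ref{sec:bp}), one must verify that the two-sided product structure — recording, at each position, the products of the images lying to its left and to its right under the lexicographic ordering — matches exactly the information a unary quantifier has access to when evaluated over a lexicographically ordered interpretation, and that the Boolean algebra of types and the finite set of units transform correctly under the product. Getting the bookkeeping of types, units, and the left/right context products right, and confirming that the resulting typed monoid is the syntactic one so that division keeps us inside the closure, is the technical heart; the rest is routine induction enabled by substitution closure.
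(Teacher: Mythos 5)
This theorem is not proved in the paper at all: it is imported verbatim from Krebs's thesis (Theorem~4.14 of the cited work), with only a footnote noting that the original is more general. So there is no in-paper proof to compare against; the relevant comparison is with Krebs's argument, and your outline does follow the same strategy as that source --- a double induction (on formula structure in one direction, on the number of block products in the other) hinged on a correspondence between applying one unary quantifier and taking one ordered strong block product.

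As a standalone proof, however, your write-up has a genuine gap exactly where you say the ``technical heart'' lies: the block product principle is asserted, not established, and in the typed, ordered setting it is not routine. Concretely, you must show that condition (1)(b$_<$) of Section~\ref{sec:bp} --- that $f(b_1,b_2)$ depends only on whether $b_1cb_2$, $b_1c$, and $cb_2$ lie in each type $A'$ for each $c \in C$ --- captures precisely the information available to an inner formula $\phi_i(x)$ with one free variable, namely the types of the left-context product, the right-context product, and the two combined with the unit at position $x$; the extra clauses (ii) and (iii) are what make the order atoms $y<x$ and $x<y$ expressible, and without checking this the ordered product could be either too weak or too strong. You also need to verify that $f$ ranges over \emph{units} of the upper monoid and that the output type $H$ (evaluated at $(1,1)$) yields exactly the accepting set $B$ of the quantifier, that the finite set $C$ can be chosen to be the images of the input letters, and that the resulting monoid divides (or trivially extends to) something in the closure --- the base case likewise needs the weak-closure operations, not just ``a trivial typed monoid.'' In the converse direction, the claim that every language recognized by $S_1 \boxtimes_C S_2$ is definable requires showing that membership of the prefix and suffix products in each type of $S_2$ is expressible by formulas of the logic (this is where the induction hypothesis and substitution closure do real work), and that $f$'s finite dependence on context classes, guaranteed by (1)(b$_<$), is what makes the outer quantifier's interpretation well-defined. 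None of this is carried out, so what you have is a correct proof plan matching the known argument rather than a proof.
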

Recall that the subscript in $\mathfrak{Q}_1$ means we are restricting the logic to only use the unary quantifiers in $\mathfrak{Q}$ and the prefix ``lex-'' means that the quantifiers are only applied to interpretations with a lexicographic order, which in the context of unary quantifiers just means that the order defined is $<$ itself.

\subsection{Strong Block Product Closure}\label{sec:bp}

Here, we build up to the definition of the \emph{ordered strong block product} of a pair of typed monoids.

\subsubsection{Weakly Closed Classes}
  
The \emph{direct product of two monoids} $(S, \cdot_S)$ and $(T, \cdot_T)$ is the monoid $(S \times T, \cdot)$ where $(s_1, t_1) \cdot (s_2, t_2) = (s_1 \cdot_S s_2, t_1 \cdot_T t_2)$. We define the \emph{direct product of Boolean algebras} $B_1$ and $B_2$, denoted $B_1 \times B_2$, to be the Boolean algebra generated by the set $\{A_1 \times A_2 \mid A_1 \in B_1 \text{ and } A_2 \in B_2\}$. Finally, The \emph{direct product of typed monoids} $(S, G, E) \times (T, H, F)$ is the typed monoid $(S \times T, G \times H, E \times F)$.
      
If there exists a surjective typed monoid homomorphism from $(S, G, E)$ to $(T, H, F)$, then we say that $(S, G, E)$ is a \emph{trivial extension} of $(T, H, F)$.
  
We call a set of typed monoids $T$ a \emph{weakly closed class} if it is closed under
\begin{itemize}
    \item Division: If $(S, G, E) \in T$ and $(S, G, E) \preceq (T, H, F)$, then $(T, H, F) \in T$.
    \item Direct Product: If $(S, G, E), (T, H, F) \in T$, then $(S, G, E) \times (T, H, F) \in T$.
    \item Trivial Extension: If $(S, G, E)$ is a trivial extension of $(T, H, F)$ and $(T, H, F) \in T$, then $(S, G, E) \in T$.
\end{itemize}
We write $\text{wc}(T)$ to denote the smallest weakly closed set of typed monoids containing $T$.
  
\subsubsection{The Block Product}
The block product is our main tool for the construction of algebraic characterizations of language classes via logic. Historically, the ``\emph{wreath product}'' was first used for this purpose. Since Rhodes and Tilson \cite{rhodes1989kernel}, however, the block product has typically been the preferred and easier-to-work-with tool of choice. We now build up to its definition:

A \emph{left action} $\star_l$ of a monoid $(N, \cdot)$ on a monoid $(M, +)$ is a function from $N \times M$ to $M$ such that for $n,n_1,n_2 \in N$ and $m,m_1,m_2 \in M$, \begin{align*}
    n \star_l (m_1 + m_2) &= n \star_l m_1 + n \star_l m_2\\
    (n_1 \cdot n_2) \star_l m &= n_1 \star_l (n_2 \star_l m)\\
    n \star_l 1_M &= 1_M\\
    1_N \star_l m &= m
\end{align*}
The \emph{right action} $\star_r$ of $(N, \cdot)$ on $(M, +)$ is defined dually. We say that left and right actions of $(N, \cdot)$ on $(M,+)$ are \emph{compatible} if for all $n_1,n_2 \in N$ and $m \in M$, \[
    (n_1 \star_l m) \star_r n_2 = n_1 \star_l (m \star_r n_2).
\] When clear from context, we may simply write $nm$ for $n \star_l m$ and $mn$ for $m \star_r n$.

For a pair of compatible left and right actions, $\star_l$ and $\star_r$ of $(N, \cdot)$ on $(M, +)$, the \emph{two-sided (or bilateral) semidirect product} of $(M, +)$ and $(N, \cdot)$ with respect to $\star_l$ and $\star_r$ is the monoid $(M \times N, \star)$ where for $(m_1,n_1),(m_2,n_2) \in M \times N$, \[
    (m_1,n_1) \star (m_2,n_2) = (m_1n_2 + n_1m_2, n_1 \cdot n_2).
\]

The \emph{block product} of $(M, \cdot_M)$ with $(N, \cdot_N)$, denoted $M \Box N$, is the two-sided semidirect product of $(M^{N \times N}, +)$ and $(N, \cdot)$ with respect to the left and right actions $\star_l$ and $\star_r$ where for $f,g \in M^{N \times N}$ and $n,n_1,n_2 \in N^1$,
\begin{itemize}
    \item $(M^{N \times N}, +)$ is the monoid of all functions from $N \times N$ to $M$ under componentwise product $+$: \[
        (f + g)(n_1,n_2) = f(n_1,n_2) \cdot_M g(n_1,n_2).
    \]
    \item The left action $\star_l$ of $(N, \cdot)$ on $(M^{N \times N}, +)$ is defined by \[
        (n \star_l f)(n_1, n_2) = f(n_1 \cdot_N n, n_2).
    \]
    \item The right action $\star_r$ of $(N, \cdot)$ on $(M^{N \times N}, +)$ is defined by \[
        (f \star_r n)(n_1, n_2) = f(n_1, n \cdot_N n_2).
    \]
\end{itemize}

\subsubsection{The Typed Block Product}

Let $(S, G, E)$ and $(S', G', E')$ be typed monoids and $C \subseteq S'$ be a finite set. Then, the \emph{typed block product with $C$} of $(S, G, E)$ and $(S', G', E')$, denoted $(S, G, E) \boxdot_C (S', G', E')$, is the typed monoid $(T, H, F)$ where
\begin{enumerate}[\ \ \ \ (1)]
    \item $T \leq S \Box S'$ such that $T$ is generated by the elements $(f,s')$ such that
    \begin{enumerate}[\ \ \ \ (a)]
        \item $s' \in E' \cup C$ and
        \item $f \in E^{S' \times S'}$ such that for $b_1,b_2,b_3,b_4 \in S'$, if for all $c \in C$ and all $A' \in G'$, \(
            b_1cb_2 \in A' \text{ iff } b_3cb_4 \in A',
        \) then $f(b_1,b_2) = f(b_3,b_4)$,
    \end{enumerate}
    \item $H = \{\{(f,s) \mid f(1,1) \in A\} \mid A \in G\}$ where $1$ is the identity of $S'$,
    \item and $F = \{(f,s') \mid \text{$(f,s)$ is a generator of $T$ and } s' \in E'\}$.
\end{enumerate}
    
Because the typed monoid corresponding to the order predicate is frequently needed, it is convenient to define an \emph{ordered typed block product}, $(S, G, E) \boxtimes_C (S', G', E')$ which helps simplify our algebraic representations whose numerical predicates only include order; this is defined like the typed block product above but with a change to condition (1)(b):
\begin{enumerate}[\ \ \ \ {(1)}(b${}_<$)]
    \setcounter{enumi}{1}
    \item $f \in E^{S' \times S'}$ such that for $b_1,b_2,b_3,b_4 \in S'$, if for all $c \in C$ and all $A' \in G'$,
    \begin{enumerate}[\ \ \ \ (i)]
        \item $b_1cb_2 \in A'$ iff $b_3cb_4 \in A'$,
        \item $b_1c \in A'$ iff $b_3c \in A'$,
        \item and $cb_2 \in A'$ iff $cb_4 \in A'$,
    \end{enumerate}
    then $f(b_1,b_2) = f(b_3,b_4)$.
\end{enumerate}

For a set of typed monoids $W$, we let \[
    W_0 = \emph{wc}(W)
\] and for each $k \geq 1$,
\begin{itemize}
    \item $W_k = \{S_1 \boxdot_C S_2 \mid S_1 \in W_0 \text{, } S_2 \in W_{k-1} \text{, and finite } C \subseteq S_2\}$
    \item $W_k^< = \{S_1 \boxtimes_C S_2 \mid S_1 \in W_0 \text{, } S_2 \in W_{k-1}^< \text{, and finite } C \subseteq S_2\}$
\end{itemize}
We define the \emph{(ordered) strong block product closure of $W$}, denoted $\text{sbpc}(W)$ ($\text{sbpc}_<(W)$), as
\begin{itemize}
    \item $\text{sbpc}(W) = \bigcup_{k \in \mathbb{N}} W_k$
    \item $\text{sbpc}_<(W) = \bigcup_{k \in \mathbb{N}} W_k^<$.
\end{itemize}

\section{Simplifying Multiplication Quantifiers}\label{sec:mult}

To use Theorem~\ref{thm:logcircalgequiv} to obtain an algebraic characterization of $\NC^1$, we need to characterize this class in a logic with only unary quantifiers.  Remark~\ref{thm:s5logfornc1} gives us a characterization using first-order quantifiers and quantifiers in $\Gamma^{S_5}$.  Our aim in this section is to show that we can eliminate the use of quantifiers of dimension higher than $1$ in this logic.  As a first step, we show that we can restrict ourselves to quantifiers $\Gamma^{{S_5},B}_{\delta}$ for a \emph{fixed} function $\delta$. 

\begin{lemma}\label{lem:finitetuple}
    For every finite monoid $M$, there exists a function $\delta : \{0,1\}^{|M|} \rightarrow M$ such that for every $B \subseteq M$ and $\gamma : \{0,1\}^k \rightarrow M$, and dimension $d$, the quantifier $\Gamma^{M,B}_{d,\gamma}$ is definable in $(\Gamma^{M,B}_{d,\delta})[<]$.
  \end{lemma}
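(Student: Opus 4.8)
The plan is to take $\delta$ to be a universal ``one-hot decoder'' for $M$ and to push all the work of computing an arbitrary $\gamma$ into the quantifier-free part of the formula. Fix an enumeration $M = \{m_1, \ldots, m_{|M|}\}$ and define $\delta : \{0,1\}^{|M|} \to M$ by setting $\delta(b) = m_i$ whenever $b$ is the one-hot encoding of $i$ (that is, $b_i = 1$ and $b_j = 0$ for $j \neq i$), and $\delta(b) = 1_M$ for every $b$ that is not a one-hot encoding. The value of $\delta$ on non-one-hot inputs will be irrelevant, since the interpretations we build will only ever feed one-hot tuples to $\delta$.

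Now fix $B \subseteq M$, a dimension $d$, an arbitrary $\gamma : \{0,1\}^k \to M$, and consider an application $\Gamma^{M,B}_{d,\gamma}\,\overline{x}\,\overline{y}\,(\phi_<(\overline{x}), \phi_1(\overline{y}), \ldots, \phi_k(\overline{y}))$. I would keep the order formula $\phi_<$ unchanged, and for each $i \in [|M|]$ define
$$\psi_i(\overline{y}) \;=\; \bigvee_{b \in \gamma^{-1}(m_i)} \Bigl( \bigwedge_{j\,:\,b_j = 1} \phi_j(\overline{y}) \;\wedge\; \bigwedge_{j\,:\,b_j = 0} \neg \phi_j(\overline{y}) \Bigr),$$
a quantifier-free Boolean combination of the $\phi_j$ asserting that the bit-vector $(\phi_1(\overline{y}), \ldots, \phi_k(\overline{y}))$ lies in the preimage $\gamma^{-1}(m_i)$. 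The key point is that $\gamma$ is a fixed function on the finite set $\{0,1\}^k$, so each $\gamma^{-1}(m_i)$ is a fixed finite set of bit-vectors and the disjunction above has bounded size.

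The verification is then a pointwise computation. For any structure $\str{A}$, assignment, and $d$-tuple $\overline{a}$, the bit-vector $(\phi_1(\overline{a}), \ldots, \phi_k(\overline{a}))$ maps under $\gamma$ to exactly one element $m_i$ of $M$; consequently $\psi_i(\overline{a})$ holds and $\psi_{i'}(\overline{a})$ fails for every $i' \neq i$. Thus $(\psi_1(\overline{a}), \ldots, \psi_{|M|}(\overline{a}))$ is precisely the one-hot encoding of $i$, whence $\delta(\psi_1(\overline{a}), \ldots, \psi_{|M|}(\overline{a})) = m_i = \gamma(\phi_1(\overline{a}), \ldots, \phi_k(\overline{a}))$. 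Since the two interpretations assign the same monoid element to every $d$-tuple, and both products are taken in the same order $\sqsubset$ defined by the (unchanged) formula $\phi_<$, the two products over all $d$-tuples coincide, so one lies in $B$ iff the other does. Hence $\Gamma^{M,B}_{d,\gamma}\,\overline{x}\,\overline{y}\,(\phi_<,\phi_1,\ldots,\phi_k)$ is equivalent to $\Gamma^{M,B}_{d,\delta}\,\overline{x}\,\overline{y}\,(\phi_<,\psi_1,\ldots,\psi_{|M|})$, and since the logic is substitution closed we may replace the $\phi_j$ by arbitrary subformulas, establishing that $\Gamma^{M,B}_{d,\gamma}$ is definable in $(\Gamma^{M,B}_{d,\delta})[<]$.

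There is no serious obstacle here; the content of the lemma is just the observation that a single decoder $\delta$ works uniformly for all $\gamma$, because the dependence on $\gamma$ can be absorbed into quantifier-free formulas. The only points requiring care are (i) specifying $\delta$ on non-one-hot inputs, which is harmless since such inputs never arise, and (ii) confirming that the $\psi_i$ genuinely produce a one-hot tuple at every $d$-tuple, which rests on $\gamma$ being a well-defined function. Note in particular that the construction introduces no new quantifiers and uses no numerical predicates beyond the order already carried by the interpretation, so we remain inside $(\Gamma^{M,B}_{d,\delta})[<]$; specializing $\phi_<$ to the lexicographic order recovers the case of primary interest.
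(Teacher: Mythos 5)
Your proposal is correct and follows essentially the same route as the paper's proof: the same one-hot decoder $\delta$, the same formulas $\psi_i$ selecting the $\gamma$-preimage of each monoid element as a Boolean combination of the given formulas, and the same pointwise verification that $\delta$ applied to the new string equals $\gamma$ applied to the old one. The only cosmetic differences are that the paper sends non-one-hot letters to an arbitrary element rather than $1_M$ (both irrelevant, as you note) and states the construction at the level of the quantifier's defining class of structures before invoking substitution closure, whereas you substitute into a given application directly.
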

  \begin{proof}
    Recall that $\Gamma^{M,B}_{d,\gamma}$ is the class of structures $\str{A}$ in a vocabulary $\tau$ with one $2d$-ary ordering relation $<$ and $k$ $d$-ary relations $R_1,\ldots,R_k$ such that $\gamma(w_{\str{A}}) \in B$ where $w_{\str{A}}$ is the $||\mathfrak{A}||^d$-length string associated with $\str{A}$ as in Def.~\ref{def:associated}.

    Let $c = |M|$, fix an enumeration $\{m_1,\ldots,m_c\}$ of $M$, and let $z$ be an arbitrary element of $M$.  Let $\delta : \{0,1\}^c \rightarrow M$ be the function where $\delta(w) = m_i$ if $w$ is the one-hot encoding of $i$ and $\delta(w) = z$, for some arbitrary $z \in M$, otherwise (that is, if the number of occurrences of the symbol $1$ in the string $w$ is not exactly one).

    For each $t \in M$, define the formula $\psi_t(y_1,\dots,y_d)$ as follows:
    \begin{align*}
        \psi_t(y_1,&\dots,y_d) : = \\
        &\bigvee_{w \in \{0,1\}^k : \gamma(w) = t} \left( \bigwedge_{i \in [k] : w_i = 1} R_i(y_1,\dots,y_d) \land \bigwedge_{i \in [k] : w_i = 0} \neg R_i(y_1,\dots,y_d) \right).
    \end{align*}

    It is easy to see that in a $\tau$-structure $\str{A}$, we have $\str{A} \models \psi_t[a_1/y_1, \dots, a_d/y_d]$ if, and only if, the element of $w_{\str{A}}$ indexed by $(a_1, \dots, a_d)$ is mapped by $\gamma$ to $t$.  Thus, in particular, the formulas $\psi_{m_1},\ldots,\psi_{m_c}$ define disjoint sets that partition the universe of $\str{A}$.  We now claim that the quantifier $\Gamma^{M,B}_{\gamma}$ is defined by the formula:
      $$ \Gamma^{M,B}_{\delta}\overline{x_1}\overline{x_2}y_1\dots y_d(\overline{x_1}<\overline{x_2}, \psi_{m_1}(y_1,\dots,y_d),\ldots,\psi_{m_c}(y_1,\dots,y_d)).$$

      To see this, let $I$ denote the interpretation $(<,\psi_{m_1}\ldots,\psi_{m_c})$ so that $w_{I(\str{A})}$ is a string over $\{0,1\}^c$.  By the fact that the sets defined by the formulas $\psi_{m_1},\ldots,\psi_{m_c}$ partition $|\str{A}|$ it follows that each letter of $w_{I(\str{A})}$ is a vector in $\{0,1\}^c$ with exactly one $1$.  Indeed, the element of $w_{I(\str{A})}$ indexed by $(a_1,\dots,a_d)$ is the one-hot encoding of $i$ precisely if $\str{A} \models \psi_{m_i}[a_1/y_1,\dots,a_d/y_d]$.  Since $\delta$ takes the one-hot encoding of $i$ to $m_i$, we have for any $a_1,\dots,a_d \in |\str{A}|$
      \begin{align*}
        & \delta((w_{I(\str{A})})_{(a_1,\dots,a_d)})  =  m_i \\
        \text{iff}\ & \str{A} \models\psi_i[a_1/y_1, \dots, a_d/y_d] \\
        \text{iff}\ & \gamma((w_{\str{A}})_{(a_1,\dots,a_d)}) = m_i.
      \end{align*}
        Hence, $\delta(w_{I(\str{A})}) = \gamma(w_{\str{A}})$ and therefore $I(\str{A}) \in \Gamma^{M,B}_{d,\delta}$ if, and only if, $\str{A} \in \Gamma^{M,B}_{d,\gamma}$ as required.
        
  \end{proof}

  It then follows from Lemma~\ref{lem:finitetuple} and the substitution property of quantifiers that the expressive power of $(\mathfrak{Q} \cup \Gamma^{M})[\mathfrak{N}]$ is the same as that of $(\mathfrak{Q} \cup \Gamma^{M}_{\delta})[\mathfrak{N}]$.  Indeed, any application of a quantifier in $\Gamma^M$ can be replaced by an application of a quantifier in $\Gamma^{M}_{\delta}$ of the same dimension.  We next aim to show that an application of a quantifier $\Gamma^{M,B}_{d,\delta}$ with lexicographic order, can be replaced by $d$ nested applications of quantifiers $\Gamma^{M,B}_{1,\delta}$

  \begin{lemma}\label{lem:nesting}
    For any collection of quantifiers $\mathfrak{Q}$  and numerical predicates $\mathfrak{N}$, any formula of $\emph{\text{lex-}}(\mathfrak{Q}  \cup \Gamma^{M}_{\delta})[\mathfrak{N}]$ is equivalent to one of $\emph{\text{lex-}}(\mathfrak{Q}  \cup \Gamma^{M}_{1,\delta})[\mathfrak{N}]$.
  \end{lemma}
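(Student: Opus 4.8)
The plan is to proceed by structural induction on formulas, reducing each application of a $d$-dimensional quantifier to nested unary ones by exploiting associativity of multiplication in $M$ together with the recursive structure of the lexicographic order. The crux is the observation that the product defining the semantics of $\Gamma^{M,B}_{d,\delta}$ factors along the first coordinate: writing the product over all $d$-tuples $(a_1,\ldots,a_d)$ of $|\str{A}|$ taken in lexicographic order, then, because the last coordinate varies fastest, this product equals the product over $a_1 \in |\str{A}|$ (in the order $<$) of the block products over $(a_2,\ldots,a_d)$ in lexicographic order, with $a_1$ held fixed. This is a straightforward consequence of associativity and of the fact that the lexicographic order partitions the $d$-tuples into contiguous blocks indexed by their first coordinate.

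To turn this into a logical rewriting, I would prove by induction on $d$ the following claim: any formula $\Gamma^{M,B}_{d,\delta}\,\tup{y}\,(\phi_1,\ldots,\phi_c)$ applied with lexicographic order, where $c = |M|$ and the $\phi_i$ already lie in lex-$(\mathfrak{Q}\cup\Gamma^M_{1,\delta})[\mathfrak{N}]$, is equivalent to a formula of lex-$(\mathfrak{Q}\cup\Gamma^M_{1,\delta})[\mathfrak{N}]$. The base case $d=1$ is immediate. For the inductive step, fix an enumeration $m_1,\ldots,m_c$ of $M$ as in Lemma~\ref{lem:finitetuple} and, for each $j\in[c]$, set
$$\theta_j(y_1) := \Gamma^{M,m_j}_{d-1,\delta}\,y_2\cdots y_d\,(\phi_1,\ldots,\phi_c),$$
a formula with $y_1$ free asserting that the block product at $a_1 = y_1$ over the remaining coordinates equals $m_j$. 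Since this block product is a single, well-defined element of $M$, the formulas $\theta_1,\ldots,\theta_c$ define a partition of the domain of $y_1$, so at each $a_1$ exactly one $\theta_j$ holds. I would then replace the original formula by
$$\Gamma^{M,B}_{1,\delta}\,y_1\,(\theta_1,\ldots,\theta_c).$$
At each $a_1$, the interpretation produced by $(\theta_1,\ldots,\theta_c)$ is the one-hot encoding of the unique $j$ with $\theta_j(a_1)$, which $\delta$ sends to $m_j$, namely the block product at $a_1$; taking the outer product over $a_1$ in the order $<$ and comparing with $B$ exactly reproduces the factored product above. By the induction hypothesis on $d$, each $\theta_j$ (a $(d-1)$-dimensional quantifier applied to formulas already in the target logic) is equivalent to a formula of lex-$(\mathfrak{Q}\cup\Gamma^M_{1,\delta})[\mathfrak{N}]$, so the displayed formula lies in the target logic, closing the induction.

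With this claim in hand, the lemma follows by a routine bottom-up structural induction on the given formula of lex-$(\mathfrak{Q}\cup\Gamma^M_\delta)[\mathfrak{N}]$: process subformulas innermost-first so that, when an occurrence of $\Gamma^{M,B}_{d,\delta}$ is reached, its argument formulas $\phi_i$ have already been rewritten into the target logic, at which point the claim applies (quantifiers from $\mathfrak{Q}$ and the Boolean connectives are left untouched). Substitution-closure of the logics guarantees that these replacements can be made in context without changing the meaning of the surrounding formula.

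The main obstacle I expect is not the algebra---the factorization is essentially immediate---but the careful bookkeeping: verifying that the block products are genuinely single elements of $M$, so that the $\theta_j$ partition the domain (which is precisely what licenses feeding them into the fixed one-hot function $\delta$); correctly tracking the free variable $y_1$ as a parameter of the inner $(d-1)$-dimensional quantifier; and confirming that the order used remains lexicographic at every level (the outer unary quantifier uses $<$, i.e.\ lexicographic on singletons, and the inner quantifier uses lexicographic order on $(d-1)$-tuples), so that the whole construction stays within the lex- fragment.
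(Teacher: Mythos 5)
Your proposal is correct and follows essentially the same route as the paper's proof: an induction on $d$ that peels off the first coordinate, defining $\theta_j(y_1)$ as the $(d-1)$-dimensional quantifier testing whether the block product equals $m_j$, feeding the resulting one-hot partition into the outer unary quantifier via $\delta$, and justifying the factorization by the block structure of the lexicographic order. The surrounding structural induction and appeal to substitution closure likewise match the paper.
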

  \begin{proof}
Again, fix an enumeration $M = \{m_1,\ldots,m_c\}$ of $M$ and recall that $\delta : \{0,1\}^c \rightarrow M$ takes the one-hot encoding of $i$ to $m_i$.
    
    We show, by induction on $d$ that if we have a formula
    \[ \Phi := \Gamma^{M,B}_{d,\delta} x_1,\ldots,x_d (\phi_1(x_1,\ldots,x_d),\ldots,\phi_c(x_1,\ldots,x_d)) \]
 where each formula $\phi_i$ is in $\text{lex-}(\mathfrak{Q}  \cup \Gamma^{M}_{1,\delta})[\mathfrak{N}]$, then $\Phi$  is equivalent to a formula of $\text{lex-}(\mathfrak{Q}  \cup \Gamma^{M}_{1,\delta})[\mathfrak{N}]$.  The result then follows by induction on the structure of the formula.

    The base case when $d=1$ is trivially true.  Assume then that we have the formula $\Phi$ for $d > 1$ and let $$I = (\phi_1(x_1,\ldots,x_d),\ldots,\phi_c(x_1,\ldots,x_d))$$ denote the interpretation of dimension $d$.

    We claim that $\Phi$ is equivalent to the formula
    \[ \Phi_1 := \Gamma_{1,\delta}^{M,B} x_1  (\theta_1(x_1),\ldots,\theta_c(x_1)) \]
    where $\theta_i$ is the formula
    \[ \theta_i(x_i) := \Gamma_{d-1,\delta}^{M,m_i}x_2,\ldots,x_d (\phi_1(x_1,\ldots,x_d),\ldots,\phi_c(x_1,\ldots,x_d)).\]

    Thus, $\Phi_1$ is obtained by the application of $ \Gamma_{\delta}^{M,B}$ to an interpretation $$I_1 := (\theta_1(x_1),\ldots,\theta_c(x_1))$$ where each formula $\theta_i$ is obtained as the application of a quantifier $\Gamma_{d-1,\delta}^{M,s}$ to an interpretation defined by formulas of $\text{lex-}(\mathfrak{Q} \cup \Gamma^{M}_{1,\delta})[\mathfrak{N}]$.  Thus, by the inductive hypothesis, each $\theta_i$ is equivalent to a formula of $\text{lex-}(\mathfrak{Q}  \cup \Gamma^{M}_{1,\delta})[\mathfrak{N}]$ and we are done.

      It remains to show that $\Phi$ and $\Phi_1$ are equivalent on any structure $\str{A}$.  To see this, fix an assignment $\alpha$ of values in $|\str{A}|$ to the free variables of $\Phi$.  We need to show that $\str{A} \models \Phi[\alpha]$ if, and only if, $\str{A} \models \Phi_1[\alpha]$.  Let $n$ be the length of $\str{A}$ and assume without loss of generality that the elements of $|\str{A}|$ are $\{1,\ldots,n\}$ in that order.

      Now, $w_{I(\str{A},\alpha)}$ denotes the string associated with the structure $I(\str{A},\alpha)$ and note that this is a string of length $n^d$ whose elements are indexed by $d$-tuples of elements of $\str{A}$ in lexicographic order.  By definition, $\str{A} \models \Phi[\alpha]$ precisely if $\delta(w_{I(\str{A},\alpha)}) \in B$.  We can also regard $I$ as an interpretation of dimension $d-1$ obtained by treating the variable $x_1$ as a parameter.  We write $w_{I(\str{A},\alpha[a/x_1])}$ for the string of length $n^{d-1}$ that results from applying this interpretation with the assignment of $a$ to the variable $x_1$.  Since the ordering of $d$-tuples in $w_{I(\str{A},\alpha)}$ is lexicographic, we have
\[
        w_{I(\str{A},\alpha)} = w_{I(\str{A},\alpha[1/x_1])}\cdots w_{I(\str{A},\alpha[n/x_1])}
\]
      and thus
\[
        \delta(w_{I(\str{A},\alpha)}) = \delta(w_{I(\str{A},\alpha[1/x_1])})\cdots \delta(w_{I(\str{A},\alpha[n/x_1])}).
\]

      Now, by definition of $\theta_i$, we have that $\str{A} \models \theta_i[\alpha[a/x_1]]$ if, and only if, $\delta(w_{I(\str{A},\alpha[a/x_1])}) = m_i$.  Thus, for each $a \in |\str{A}|$, there is exactly one $i$ such that  $\str{A} \models \theta_i[\alpha[a/x_1]]$.  Thus, 
      $w_{I_1(\str{A},\alpha)}$ is the string of length $n$ whose  $a$th element is the one-hot encoding of $i$ exactly when $\delta(w_{I(\str{A},\alpha[a/x_1])}) = m_i$.  In other words, $\delta((w_{I_1(\str{A},\alpha)})_a) = \delta(w_{I(\str{A},\alpha[a/x_1])})$.
      Thus,
    \begin{align*}
        & \str{A} \models \Phi[\alpha] \\
        \text{iff}\ & \delta(w_{I(\str{A},\alpha)}) \in B \\
        \text{iff}\ & \delta(w_{I(\str{A},\alpha[1/x_1])})\cdots \delta(w_{I(\str{A},\alpha[n/x_1])}) \in B \\
        \text{iff}\ & \delta((w_{I_1(\str{A},\alpha)})_1)\cdots \delta((w_{I_1(\str{A},\alpha)})_n) \in B \\
        \text{iff}\ & \delta(w_{I_1(\str{A},\alpha)}) \in B \\
        \text{iff}\ & \str{A} \models \Phi_1[\alpha].
    \end{align*}

    \end{proof}

    Now we are ready to state the main theorem of this section.
    \begin{theorem}\label{thm:finitebinding}
         For every finite monoid $M$, there exists a function $\delta : \{0,1\}^{|M|} \rightarrow M$ such that for any collection of quantifiers $\mathfrak{Q}$ and any set $\mathfrak{N}$ of numerical predicates, every formula of $\emph{\text{lex-}}(\mathfrak{Q} \cup \Gamma^M)[\mathfrak{N}]$ is equivalent to a formula of $\emph{\text{lex-}}(\mathfrak{Q} \cup \Gamma^M_{1,\delta})[\mathfrak{N}]$.
   \end{theorem}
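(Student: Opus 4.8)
The plan is to prove the theorem by chaining together the two lemmas already established in this section. Lemma~\ref{lem:finitetuple} supplies a single function $\delta$ that lets me replace the data $(\gamma,d)$ of an arbitrary quantifier by the fixed $\delta$ while keeping the dimension, and Lemma~\ref{lem:nesting} then collapses the dimension to $1$. Concretely, I would take $\delta : \{0,1\}^{|M|} \to M$ to be exactly the function produced by Lemma~\ref{lem:finitetuple}, so that every multiplication quantifier $\Gamma^{M,B}_{d,\gamma}$ over $M$ is definable in $(\Gamma^{M,B}_{d,\delta})[<]$.

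First I would show that $\text{lex-}(\mathfrak{Q}\cup\Gamma^M)[\mathfrak{N}]$ and $\text{lex-}(\mathfrak{Q}\cup\Gamma^M_{\delta})[\mathfrak{N}]$ have the same expressive power. Given a formula $\Phi$ of $\text{lex-}(\mathfrak{Q}\cup\Gamma^M)[\mathfrak{N}]$, I proceed by replacing every application of a quantifier $\Gamma^{M,B}_{d,\gamma}$ with the equivalent formula furnished by Lemma~\ref{lem:finitetuple}, invoking the substitution closure of the logic (as noted in the remark following that lemma) to carry out the replacement uniformly, including inside nested subformulas. The point to verify is that this replacement stays within the lex- fragment and within $\Gamma^M_{\delta}$: the defining formula of Lemma~\ref{lem:finitetuple} applies $\Gamma^{M,B}_{d,\delta}$ of the \emph{same} dimension $d$ to the lexicographic order $\overline{x_1}<\overline{x_2}$ on $d$-tuples, and its new interpretation formulas $\psi_{m_1},\dots,\psi_{m_c}$ are quantifier-free Boolean combinations of the original interpretation formulas. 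Hence no new quantifier enters the order formula, no $\Gamma^M$-quantifier is applied to a non-lexicographic order, and the dimension is unchanged, so the transformed formula indeed lies in $\text{lex-}(\mathfrak{Q}\cup\Gamma^M_{\delta})[\mathfrak{N}]$.

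Having obtained an equivalent formula of $\text{lex-}(\mathfrak{Q}\cup\Gamma^M_{\delta})[\mathfrak{N}]$, I would apply Lemma~\ref{lem:nesting} directly, which asserts that every such formula is equivalent to one of $\text{lex-}(\mathfrak{Q}\cup\Gamma^M_{1,\delta})[\mathfrak{N}]$. Composing the two transformations yields, from any $\Phi$ of $\text{lex-}(\mathfrak{Q}\cup\Gamma^M)[\mathfrak{N}]$, an equivalent formula of $\text{lex-}(\mathfrak{Q}\cup\Gamma^M_{1,\delta})[\mathfrak{N}]$, which is exactly the claim.

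The main obstacle — in fact the only point requiring care — is the bookkeeping in the first step: confirming that the substitution from Lemma~\ref{lem:finitetuple} genuinely preserves the lex- restriction rather than inadvertently producing quantifier applications over non-lexicographic orders or inflating the dimension. Once it is checked that the substitution fixes the dimension, uses only the lexicographic order on $d$-tuples, and introduces only quantifier-free connectives into the interpretation, the two lemmas chain together cleanly and no further computation is needed.
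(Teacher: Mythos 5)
Your proposal is correct and follows exactly the paper's own argument: apply Lemma~\ref{lem:finitetuple} to replace every $\Gamma^{M,B}_{d,\gamma}$ by a same-dimension quantifier over the fixed $\delta$ (using substitution closure), then apply Lemma~\ref{lem:nesting} to collapse the dimension to one. The extra bookkeeping you flag---that the substitution preserves the lex- restriction and the dimension---is a reasonable point of care that the paper's proof leaves implicit.
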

   \begin{proof}
     Let $\phi$ be any formula of $\text{lex-} (\mathfrak{Q}  \cup \Gamma^M)[\mathfrak{N}]$.  By Lemma~\ref{lem:finitetuple} we can replace all occurrences of quantifiers in $\Gamma^{M,B}_{\gamma}$ by their definitions using quantifiers in $\Gamma^{M,B}_{\delta}$ to get a  formula $\phi'$ of $\text{lex-} (\mathfrak{Q}  \cup \Gamma^M_{\delta})[\mathfrak{N}]$ equivalent to $\phi$.  Finally, by Lemma~\ref{lem:nesting}, there is a formula of $\text{lex-}(\mathfrak{Q}  \cup \Gamma^M_{1,\delta})[\mathfrak{N}]$ equivalent to $\phi'$.
\end{proof}

Note that for a finite monoid $M$, while $\Gamma^{M}$ and $\Gamma^{M}_1$ are infinite sets, $\Gamma^{M}_{1,\delta}$ is a finite set.
Therefore, this gives us a logic characterizing $\NC^1$ which not only uses unary quantifiers but also only has a finite number of quantifiers:
\begin{corollary}\label{cor:alogtimelogicbetter}
    There exists a $\delta : \{0,1\}^k \rightarrow S_5$ such that \(
        \NC^1 = \mathcal{L}(\emph{\text{lex-}}(\FO \cup \Gamma^{S_5}_{1,\delta})[+,\times]).
    \)
\end{corollary}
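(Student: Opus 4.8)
The plan is to derive the corollary directly from the two principal ingredients already established: Remark~\ref{thm:s5logfornc1}, which gives $\NC^1 = \mathcal{L}(\text{lex-}(\FO\cup\Gamma^{S_5})[+,\times])$, and Theorem~\ref{thm:finitebinding}, which lets us collapse all quantifiers of $\Gamma^{S_5}$ down to the unary quantifiers associated with a single fixed function $\delta$. Concretely, I would instantiate Theorem~\ref{thm:finitebinding} with the finite monoid $M = S_5$, the quantifier collection $\mathfrak{Q} = \FO$, and the numerical predicates $\mathfrak{N} = \{+,\times\}$. Since $S_5$ is a finite monoid, the theorem supplies a function $\delta : \{0,1\}^{|S_5|} \to S_5$---so that $k = |S_5| = 120$---such that every formula of $\text{lex-}(\FO\cup\Gamma^{S_5})[+,\times]$ is equivalent to one of $\text{lex-}(\FO\cup\Gamma^{S_5}_{1,\delta})[+,\times]$. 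This fixes the witness $\delta$ required by the corollary.

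It then remains to check the equality of the two language classes. One inclusion is immediate: since $\Gamma^{S_5}_{1,\delta} \subseteq \Gamma^{S_5}$, every formula of $\text{lex-}(\FO\cup\Gamma^{S_5}_{1,\delta})[+,\times]$ is already a formula of $\text{lex-}(\FO\cup\Gamma^{S_5})[+,\times]$, so
$$\mathcal{L}(\text{lex-}(\FO\cup\Gamma^{S_5}_{1,\delta})[+,\times]) \subseteq \mathcal{L}(\text{lex-}(\FO\cup\Gamma^{S_5})[+,\times]) = \NC^1,$$
the last equality being Remark~\ref{thm:s5logfornc1}. For the reverse inclusion, the formula equivalence furnished by Theorem~\ref{thm:finitebinding} shows that any language in $\mathcal{L}(\text{lex-}(\FO\cup\Gamma^{S_5})[+,\times]) = \NC^1$ is already defined by a formula of $\text{lex-}(\FO\cup\Gamma^{S_5}_{1,\delta})[+,\times]$, giving $\NC^1 \subseteq \mathcal{L}(\text{lex-}(\FO\cup\Gamma^{S_5}_{1,\delta})[+,\times])$. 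Combining the two inclusions yields the claimed identity.

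Given that both of the heavy lifts---the fixed-binding reduction of Lemma~\ref{lem:finitetuple} and the dimension-collapse of Lemma~\ref{lem:nesting}, packaged together as Theorem~\ref{thm:finitebinding}---are already in hand, I do not expect any genuine obstacle here: the corollary is a direct specialization. The only points meriting care are purely bookkeeping. First, one must ensure that the ``\text{lex-}'' restriction is respected throughout, which it is, since Theorem~\ref{thm:finitebinding} is stated precisely for the lex-fragment and Remark~\ref{thm:s5logfornc1} concerns that same fragment. Second, one should observe that $\Gamma^{S_5}_{1,\delta}$ is in fact a \emph{finite} set of quantifiers, being indexed only by the subsets $B \subseteq S_5$ once $\delta$ is fixed; this is exactly what upgrades the characterization from merely using unary quantifiers to using only finitely many, as noted in the discussion preceding the statement.
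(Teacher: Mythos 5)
Your proposal is correct and matches the paper's own (implicit) derivation exactly: the corollary is obtained by instantiating Theorem~\ref{thm:finitebinding} with $M = S_5$, $\mathfrak{Q} = \FO$, and $\mathfrak{N} = \{+,\times\}$, and combining the resulting formula equivalence with Remark~\ref{thm:s5logfornc1}. The two inclusions you spell out, and the observation that $\Gamma^{S_5}_{1,\delta}$ is finite, are precisely the points the paper relies on.
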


\noindent This simplifies our construction of an algebra capturing $\NC^1$.

Another consequence of Theorem~\ref{thm:finitebinding} is Theorem~\ref{thm:unaryfologforreg} (\cite[Theorem~5.1]{lautemann2001descriptive}). Thus, we get a proof of Theorem~\ref{thm:unaryfologforreg} which is purely logical,  unlike the original proof which relies on the use of finite automata.  Furthermore, we also resolve a question left open in Lautemann et al.~\cite{lautemann2001descriptive}:
\begin{corollary}\label{cor:openquestionfromlautemann}
    $\mathcal{L}(\emph{\text{lex-}}(\Gamma^{\fin})[+,\times]) = \mathcal{L}(\emph{\text{lex-}}(\Gamma^{\fin}_1)[+,\times])$
\end{corollary}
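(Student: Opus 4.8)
The inclusion from right to left is immediate, since $\Gamma^{\fin}_1 \subseteq \Gamma^{\fin}$: every formula of $\text{lex-}(\Gamma^{\fin}_1)[+,\times]$ is already a formula of $\text{lex-}(\Gamma^{\fin})[+,\times]$. The content of the corollary is therefore the reverse inclusion, which I would obtain by applying Theorem~\ref{thm:finitebinding} repeatedly, once for each finite monoid appearing in a given formula. The point to manage is that Theorem~\ref{thm:finitebinding} eliminates higher-dimensional quantifiers only for a \emph{single} monoid $M$, whereas a formula of $\text{lex-}(\Gamma^{\fin})[+,\times]$ may mix quantifiers over several distinct finite monoids.

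The plan is as follows. Fix a formula $\phi \in \text{lex-}(\Gamma^{\fin})[+,\times]$. Since $\phi$ is a finite syntactic object, only finitely many distinct finite monoids, say $M_1,\ldots,M_r$, occur as the base monoids of its multiplication quantifiers. I would then eliminate higher-dimensional quantifiers one monoid at a time. To handle $M_1$, regard $\phi$ as a formula of $\text{lex-}(\mathfrak{Q}_1 \cup \Gamma^{M_1})[+,\times]$, where $\mathfrak{Q}_1 = \FO \cup \Gamma^{M_2} \cup \cdots \cup \Gamma^{M_r}$ absorbs the standard quantifiers and all quantifiers over the remaining monoids into the generic quantifier set $\mathfrak{Q}$ of the theorem. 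This is legitimate: every application of a $\Gamma^{M_1}$-quantifier in $\phi$ is to a lexicographically ordered interpretation, so $\phi$ lies in the lex- fragment relative to $M_1$. Applying Theorem~\ref{thm:finitebinding} with $M = M_1$ yields a function $\delta_1 : \{0,1\}^{|M_1|} \to M_1$ and an equivalent formula $\phi_1 \in \text{lex-}(\mathfrak{Q}_1 \cup \Gamma^{M_1}_{1,\delta_1})[+,\times]$ in which every $M_1$-quantifier is unary.

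Next I would iterate. The transformation supplied by the theorem rewrites only the $\Gamma^{M_1}$-quantifiers; the quantifiers in $\mathfrak{Q}_1$—in particular those over $M_2,\ldots,M_r$—are treated as opaque component formulas and left untouched (this is exactly how the component formulas are carried along in the induction of Lemma~\ref{lem:nesting}). Hence in $\phi_1$ the $M_1$-quantifiers are all unary while those over $M_2,\ldots,M_r$ remain of arbitrary dimension, still applied to lexicographic interpretations. I then repeat the construction with $M_2$, absorbing $\FO \cup \Gamma^{M_1}_{1,\delta_1} \cup \Gamma^{M_3} \cup \cdots \cup \Gamma^{M_r}$ into $\mathfrak{Q}_2$; since $\Gamma^{M_1}_{1,\delta_1}$ already consists of unary quantifiers, folding it into $\mathfrak{Q}_2$ does no harm. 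After $r$ applications, one for each $M_i$, every multiplication quantifier in the resulting formula is unary, so it lies in $\text{lex-}(\Gamma^{\fin}_1)[+,\times]$ and is equivalent to $\phi$. This gives $\mathcal{L}(\text{lex-}(\Gamma^{\fin})[+,\times]) \subseteq \mathcal{L}(\text{lex-}(\Gamma^{\fin}_1)[+,\times])$ and hence the corollary.

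The only genuine subtlety—and the step I would check most carefully—is the bookkeeping across the iteration: one must verify that rewriting the quantifiers over $M_i$ neither disturbs the quantifiers over monoids already made unary nor violates the lexicographic-order restriction for the monoids still to be processed. Both follow from the fact that each invocation of Theorem~\ref{thm:finitebinding} substitutes only for $\Gamma^{M_i}$ and preserves the lex- fragment for all other quantifiers. An alternative that sidesteps the iteration is to first show that every $\Gamma^{M_i}$-quantifier is definable by a $\Gamma^{M}$-quantifier over the single product monoid $M = M_1 \times \cdots \times M_r$, via projection onto the $i$-th coordinate, and then apply Theorem~\ref{thm:finitebinding} once to $M$; I would mention this but prefer the iterative argument, as it requires no additional lemma.
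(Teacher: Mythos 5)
Your proof is correct and follows the same route as the paper, which states the corollary as a direct consequence of Theorem~\ref{thm:finitebinding} without further elaboration. The one detail you add---iterating over the finitely many base monoids occurring in the formula while absorbing the others into the generic quantifier set $\mathfrak{Q}$, and checking that each pass preserves both the unarity already achieved and the lexicographic restriction on the monoids still to be processed---is exactly the bookkeeping the paper leaves implicit, and you handle it correctly.
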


\section{The Algebraic Characterization}\label{sec:alg}

Now that we have an extension of first-order logic capturing $\NC^1$ using only unary quantifiers, we are able to apply Theorem~\ref{thm:logcircalgequiv} to construct an algebra for it.  We just need to
obtain an equivalent logic using only the numerical predicate $<$ without introducing quantifiers of higher dimension.

To do this, we follow the construction of an algebra for $\TC^0$.  The \emph{majority} quantifier $\text{Maj}$ is the collection of strings over the alphabet $\{0,1\}$ in which at least half of the symbols are $1$.  The \emph{square} quantifier $\text{Sq}$ is the collection of strings over the alphabet $\{0,1\}$ in which the number of $1$s is a positive square number (i.e.,\ an element of $\mathbb{S}$).  In the logics we define below, we always use these quantifiers only with unary interpretations.

The following lemma displays some known results about the expressiveness of these quantifiers:
\begin{lemma}\label{lem:superhelperalogtime}
    \,
    \begin{enumerate}[\ \ \ \ (i)]
        \item \emph{Maj} is definable in $\emph{\text{lex-}}(\Gamma^\fin)[+,\times]$. (cf. Barrington et al. \cite{barrington1990uniformity})
        \item The quantifiers in \emph{FO} are definable in $(\emph{Maj})[<]$. (cf. Lange \cite[Theorem 3.2]{lange2004some})
        \item The numerical predicate $+$ is definable in $(\emph{Maj})[<]$. (cf. Lange \cite[Theorem 4.1]{lange2004some})
        \item The numerical predicate $\times$ is definable in $(\{\emph{Maj}, \emph{Sq}\})[<]$ and \emph{Sq} is definable in $(\emph{Maj})[<,+,\times]$. (cf. Schweikardt \cite[Theorem 2.3.f]{schweikardt2002expressive} and Krebs et al. \cite[Section 2.3]{krebs2007characterizing})
    \end{enumerate}
\end{lemma}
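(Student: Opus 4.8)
The plan is to establish each of the four items by reducing to the cited results, with the bulk of the work being to reconcile those authors' formalisms with the precise notion of definability used here (unary interpretations, the permitted numerical predicates, and the lex-convention).

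For (i), the cleanest route avoids reproving a counting argument from scratch. The Majority language $\mathrm{Maj} \subseteq \{0,1\}^*$ lies in $\TC^0$, and since $\TC^0 \subseteq \NC^1$, Theorem~\ref{thm:logfornc1} tells us it is definable in lex-$(\Gamma^{\fin})[+,\times]$. Because we only ever apply $\mathrm{Maj}$ to a unary interpretation $\phi(x)$, an application of the quantifier amounts to asking whether the string whose $a$-th symbol records the truth of $\phi(a)$ lies in the Majority language; as that string is itself defined by a formula of the ambient logic, substitution closure lets us replace the $\mathrm{Maj}$-application by the lex-$(\Gamma^{\fin})[+,\times]$-formula defining membership in $\mathrm{Maj}$. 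Alternatively, one can give a direct construction following Barrington et al.: compute the number of satisfied positions modulo a fixed family of small moduli using cyclic-group multiplication quantifiers, and use $+$ and $\times$ to reconstruct the exact count and compare it to $n/2$.

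For (ii) and (iii) I would follow Lange. For (ii), I express $\exists x\,\phi(x)$ and $\forall x\,\phi(x)$ as threshold statements on the count of $\phi$-positions and realize the threshold with an application of $\mathrm{Maj}$, using $<$ to carve out the auxiliary interpretation that shifts the effective threshold from $n/2$ to the required value. For (iii), the graph of addition $a+b=c$ is likewise a counting condition over the positions below given indices, directly expressible with $\mathrm{Maj}$ over $<$. In each case the content is already in Lange's Theorems 3.2 and 4.1; the step requiring care is checking that Lange's majority operator and ours coincide under our definition (at least half the symbols equal to $1$), so that the threshold bookkeeping transfers verbatim.

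For (iv) I would invoke Schweikardt and Krebs et al. Multiplication is recovered from squaring via $2ab=(a+b)^2-a^2-b^2$, so once $+$ is available (item (iii)) it suffices to define the squaring relation, which is precisely what $\mathrm{Sq}$ supplies by counting the number of $1$s and testing membership in $\mathbb{S}$; conversely, once $+$ and $\times$ are present, testing whether a count is a perfect square is a first-order arithmetic condition, yielding $\mathrm{Sq}$ in $(\mathrm{Maj})[<,+,\times]$. I expect this last item to be the main obstacle: unlike the others it is genuinely an interdefinability statement, and the delicate point is to verify that Schweikardt's encoding of multiplication through perfect squares matches the purely count-based semantics of $\mathrm{Sq}$ here, and that all intermediate quantities stay within the range addressable by $<$, $+$, and $\times$ on a universe of size $n$.
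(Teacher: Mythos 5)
The paper offers no proof of this lemma beyond the citations themselves, and your proposal correctly reduces each item to the same cited results (Barrington et al., Lange, Schweikardt, Krebs et al.), with the right care taken on the one genuinely local point — that Maj and Sq are only ever applied to unary interpretations, so substitution closure lets the language-level facts transfer to quantifier definability. This matches the paper's treatment.
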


Bringing everything together, we get the following algebraic characterization of $\NC^1$:

\begin{theorem}\label{thm:algebraforalogtime}
    \[\NC^1 = \mathcal{L}(\emph{sbpc}_<(\{(\mathbb{Z}, \mathbb{Z}^+, \pm 1), (\mathbb{N}, \mathbb{S}, \{0,1\}), (S_5, \wp(S_5), S_5)\})).\]
\end{theorem}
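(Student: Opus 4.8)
The plan is to derive the algebraic characterization from Theorem~\ref{thm:logcircalgequiv} in three moves: first re-express $\NC^1$ as the languages of a logic built only from \emph{unary} quantifiers and using $<$ as its sole numerical predicate; then apply Theorem~\ref{thm:logcircalgequiv} to convert this logic into an ordered strong block product closure; and finally identify the typed quantifier monoids that arise with the three typed monoids in the statement. The natural starting point is Corollary~\ref{cor:alogtimelogicbetter}, which already gives $\NC^1 = \mathcal{L}(\text{lex-}(\FO \cup \Gamma^{S_5}_{1,\delta})[+,\times])$ with only unary $S_5$-quantifiers; the obstruction to invoking Theorem~\ref{thm:logcircalgequiv} is the presence of $+$, $\times$, and the first-order quantifiers, all of which must be absorbed into unary quantifiers over $<$.

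First I would carry out this logical reduction using the majority and square quantifiers $\text{Maj}$ and $\text{Sq}$, applied only to unary interpretations. For the inclusion $\mathcal{L}(\text{lex-}(\FO \cup \Gamma^{S_5}_{1,\delta})[+,\times]) \subseteq \mathcal{L}(\text{lex-}(\{\text{Maj},\text{Sq}\} \cup \Gamma^{S_5}_{1,\delta})[<])$, I would invoke Lemma~\ref{lem:superhelperalogtime}(ii)--(iv), which make $\FO$, $+$, and $\times$ all definable in $(\{\text{Maj},\text{Sq}\})[<]$, and then use substitution closure to replace every first-order quantifier and every occurrence of $+$ and $\times$ by its definition. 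For the reverse inclusion I would note that $\text{Maj}$ defines an $\NC^1$ language by Lemma~\ref{lem:superhelperalogtime}(i), and $\text{Sq}$ does too by part~(iv); since $\NC^1$ is captured by $\text{lex-}(\FO\cup\Gamma^{S_5}_{1,\delta})[+,\times]$ and this logic can already define any quantifier computing an $\NC^1$ language (by the $S_5$-completeness underlying Remark~\ref{thm:s5logfornc1} together with part~(iv) for $\text{Sq}$), both $\text{Maj}$ and $\text{Sq}$ can be eliminated in favour of the $S_5$-quantifiers with arithmetic. Hence $\NC^1 = \mathcal{L}(\text{lex-}(\{\text{Maj},\text{Sq}\} \cup \Gamma^{S_5}_{1,\delta})[<])$; every quantifier here is unary, so the ``lex-'' prefix is vacuous and the logic has exactly the shape demanded by Theorem~\ref{thm:logcircalgequiv}.

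Applying Theorem~\ref{thm:logcircalgequiv} with $\mathfrak{Q} = \{\text{Maj},\text{Sq}\} \cup \Gamma^{S_5}_{1,\delta}$ reduces the claim to identifying the set $\bm{Q}$ of typed quantifier monoids and showing $\text{sbpc}_<(\bm{Q})$ equals the closure of the three listed typed monoids. I would compute these syntactic typed monoids directly from Definition~\ref{def:origtypedquantsemi}. Reading $\text{Maj}$ as $\Gamma^{\mathbb{Z},\mathbb{Z}_{\geq 0}}_{\gamma}$ with $\gamma(1)=1,\ \gamma(0)=-1$, the syntactic congruence identifies two strings precisely when they have the same excess of $1$s over $0$s, so the base monoid is $\mathbb{Z}$, the units are $\{+1,-1\}=\pm 1$, and the type is generated by $\mathbb{Z}_{\geq 0}$; since negation is a monoid automorphism of $\mathbb{Z}$ fixing $\pm 1$ and carrying the Boolean algebra generated by $\mathbb{Z}_{\geq 0}$ onto the one generated by $\mathbb{Z}^+$, this typed monoid is isomorphic to $(\mathbb{Z},\mathbb{Z}^+,\pm 1)$. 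Reading $\text{Sq}$ as $\Gamma^{\mathbb{N},\mathbb{S}}_{\gamma}$ with $\gamma(1)=1,\ \gamma(0)=0$, two strings are equivalent iff they carry the same number of $1$s (the square numbers determine the shift), so the base is $\mathbb{N}$, the units are $\{0,1\}$, and the type is $\mathbb{S}$, giving exactly $(\mathbb{N},\mathbb{S},\{0,1\})$.

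The main obstacle will be the $S_5$ component. Each quantifier $\Gamma^{S_5,B}_{1,\delta}$ contributes a finite typed quantifier monoid $\syn(L^{S_5,B}_{\delta})$ whose base is a quotient of $S_5$ and whose type algebra is generated by a single set $B$, so it carries far less information than $(S_5,\wp(S_5),S_5)$, which has the full powerset as types and all of $S_5$ as units. I would prove the two closures agree in both directions: every $\syn(L^{S_5,B}_{\delta})$ divides $(S_5,\wp(S_5),S_5)$ (its base is a quotient of $S_5$, its types lie in $\wp(S_5)$, and its units are images of elements of $S_5$), hence lies in the weak closure and so in $\text{sbpc}_<$ of the right-hand side; conversely, letting $B$ range over all subsets of $S_5$ and using that weak closure is closed under direct product and division, I would recover the full type algebra $\wp(S_5)$ (the type algebras generated by the various $B$ jointly separate all elements of $S_5$ in a product, after dividing down to base $S_5$) and all units $S_5$ (since $\delta$ maps the one-hot encodings onto all of $S_5$), exhibiting $(S_5,\wp(S_5),S_5)$ in the weak closure of $\bm{Q}$. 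Combining the three identifications yields $\text{sbpc}_<(\bm{Q}) = \text{sbpc}_<(\{(\mathbb{Z},\mathbb{Z}^+,\pm 1),(\mathbb{N},\mathbb{S},\{0,1\}),(S_5,\wp(S_5),S_5)\})$, and the theorem follows.
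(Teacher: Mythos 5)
Your proposal follows essentially the same route as the paper: start from Corollary~\ref{cor:alogtimelogicbetter}, use Lemma~\ref{lem:superhelperalogtime} to trade $\FO$, $+$ and $\times$ for unary $\text{Maj}$ and $\text{Sq}$ quantifiers over $<$ (checking both directions of definability), apply Theorem~\ref{thm:logcircalgequiv}, and identify the resulting typed quantifier monoids with the three listed generators. The only (harmless) divergence is in the final $S_5$ bookkeeping, where the paper first reduces to singleton accepting sets $B=\{s\}$ by Boolean combinations of quantifiers and then swaps each $(S_5,s,S_5)$ for $(S_5,\wp(S_5),S_5)$ using division together with $\mathcal{L}((S_5,\wp(S_5),S_5))\subseteq\textsc{Reg}\subseteq\NC^1$, whereas you keep general $B$ and recover the full type algebra via a direct-product-and-division argument.
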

\begin{proof}
    Let $\delta : \{0,1\}^c \rightarrow S_5$ be as defined in Lemma \ref{lem:finitetuple}. It is easy to see that the typed quantifier monoid for Maj is $(\mathbb{Z}, \mathbb{Z}^+, \pm 1)$, for Sq is $(\mathbb{N}, \mathbb{S}, \{0,1\})$, and for $\Gamma^{S_5,s}_{1,\delta}$ is $(S_5, \{s\}, S_5)$ for any $s \in S_5$.

    By Corollary \ref{cor:alogtimelogicbetter}, we have that \(
        \NC^1 = \mathcal{L}(\text{lex-}(\text{FO} \cup \Gamma^{S_5}_{1,\delta})[+,\times]).
    \) Lemma \ref{lem:superhelperalogtime} (ii)--(iv) allow us to define FO, $+$, and $\times$ in $(\{\text{Maj}, \text{Sq}\})[<]$ and (i) and (iv) allow us to define Maj and Sq in $\text{lex-}(\FO \cup \Gamma^{S_5}_{1,\delta})[+,\times]$. Therefore, \(
        \NC^1 = \mathcal{L}(\text{lex-}(\Gamma^{S_5}_{1,\delta} \cup \{\text{Maj}, \text{Sq}\})[<]).
    \) Moreover, we may restrict this to just quantifiers $\Gamma^{S_5,A}$ with $A$ a singleton set, as for any $A \subseteq S_5$, the quantifier $\Gamma^{S_5,A}_{1,\delta}$ may be easily defined using Boolean combinations of quantifiers $\Gamma^{S_5, s}_{1,\delta}$ since $S_5$ is finite. Theorem \ref{thm:logcircalgequiv} then gives us the algebraic characterization of
    \begin{align*}
        \NC^1 &=\mathcal{L}(\text{sbpc}_<(\{(\mathbb{Z}, \mathbb{Z}^+, \pm 1), (\mathbb{N}, \mathbb{S}, \{0,1\})\} \cup \{(S_5, s, S_5) \mid \{s\} \in \wp(S_5)\})).
    \end{align*}
    Because $(S_5, s, S_5) \prec (S_5, \wp(S_5), S_5)$ for all $\{s\} \in \wp(S_5)$, we lose no expressive power by replacing all elements of the form $(S_5, s, S_5)$ with $(S_5, \wp(S_5), S_5)$. We neither gain expressive power because $\mathcal{L}((S_5, \wp(S_5), S_5)) \subseteq \textsc{Reg} \subseteq \textsc{NC}^1$.  Therefore, we have our final characterization: \[
        \NC^1  = \mathcal{L}(\text{sbpc}_<(\{(\mathbb{Z}, \mathbb{Z}^+, \pm 1), (\mathbb{N}, \mathbb{S}, \{0,1\}), (S_5, \wp(S_5), S_5)\})).
    \]
\end{proof}

\section{For-Programs and Non-Lexicographic Interpretations}\label{sec:nonlex}


The proof of Lemma~\ref{lem:nesting} relies crucially on the fact that we only allow lexicographic ordering of tuples in our interpretations.  This is sufficient for the algebraic characterization in Section~\ref{sec:alg}.  However, in this section, we prove a more general version: we prove that even when interpretations are allowed to use any first-order definable ordering of tuples, we can replace multiplication quantifiers of higher arity by unary quantifiers.  To do so, we rely on the  work of Boja\'{n}czyk et al. \cite{bojanczyk2019string}, which characterizes first-order definable orders in terms of \emph{$d$-enumerators} and \emph{for-programs} which we review below.  These are combined with techniques introduced in the proof of Theorem~\ref{thm:finitebinding} to prove the following:

\begin{theorem}\label{thm:morethanlex}
    Let $\mathfrak{Q}$ be any collection of quantifiers and $\mathfrak{N}$ any collection of numerical predicates. Then, \(\mathcal{L}(\emph{\text{fo-}}(\mathfrak{Q}  \cup \Gamma^\fin)[\mathfrak{N}]) = \mathcal{L}(\emph{\text{lex-}}(\mathfrak{Q}  \cup \Gamma^\fin_1)[\mathfrak{N} \cup \{<\}]).\)
  \end{theorem}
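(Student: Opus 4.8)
The plan is to establish the two inclusions separately. The inclusion $\mathcal{L}(\text{lex-}(\mathfrak{Q} \cup \Gamma^\fin_1)[\mathfrak{N} \cup \{<\}]) \subseteq \mathcal{L}(\text{fo-}(\mathfrak{Q} \cup \Gamma^\fin)[\mathfrak{N}])$ is the routine one: every unary multiplication quantifier applied with the lexicographic order is, trivially, an application of a quantifier from $\Gamma^\fin$ to an interpretation whose order is the $(\FO)[<]$-formula ``$<$'' itself, so the only thing to check is that the explicit occurrences of $<$ permitted on the left can be eliminated. An atomic formula $x<y$ is equivalent to a unary multiplication quantifier over a small finite monoid that scans the positions in their natural (hence first-order definable) order, uses its two body formulas to mark the positions $x$ and $y$ via equality, and accepts exactly when the mark for $x$ is scanned before that for $y$; substituting this for each explicit $<$ yields a formula of $\text{fo-}(\mathfrak{Q} \cup \Gamma^\fin)[\mathfrak{N}]$.

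The substantial inclusion is $\mathcal{L}(\text{fo-}(\mathfrak{Q} \cup \Gamma^\fin)[\mathfrak{N}]) \subseteq \mathcal{L}(\text{lex-}(\mathfrak{Q} \cup \Gamma^\fin_1)[\mathfrak{N} \cup \{<\}])$, which is Theorem~\ref{thm:finitebinding} generalized from the lexicographic order to an arbitrary first-order definable one. First I would apply the construction of Lemma~\ref{lem:finitetuple}---which rewrites the body formulas but leaves the order formula untouched, and so is insensitive to which order is used---to reduce to a single application of a quantifier $\Gamma^{M,B}_{d,\delta}$ with the fixed $\delta$, applied to an interpretation whose order formula $\varphi_<$ is an $(\FO)[<]$-formula defining some linear order $\sqsubset$ on $d$-tuples. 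By substitution-closure of the logic it then suffices to re-express this single formula, assuming inductively that its body formulas already lie in the target logic.

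The key step is to invoke the characterization of Boja\'{n}czyk et al.~\cite{bojanczyk2019string}, by which the enumeration of the $d$-tuples in the order $\sqsubset$ is computed by a $d$-enumerator, that is, a for-program: a bounded nest of for-loops, each scanning the positions left-to-right or right-to-left, whose leaves emit the $d$-tuples in the order $\sqsubset$. I would then induct on the nesting depth, mirroring the loop structure by nested unary multiplication quantifiers exactly as in Lemma~\ref{lem:nesting}. The outermost loop becomes an outermost unary quantifier whose body, for each fixed value of the loop variable, is the formula (furnished by the induction hypothesis) evaluating the monoid element produced by the remaining inner program; a left-to-right loop is handled over $M$, while a right-to-left loop is handled by passing to the opposite monoid $M^{\mathrm{op}}$, whose unary quantifiers again lie in $\Gamma^\fin_1$, since the product taken in decreasing order over $M$ equals the product taken in increasing order over $M^{\mathrm{op}}$. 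At a leaf the emitted $d$-tuple is a first-order definable function of the current loop variables, so the value $\gamma$ assigns to it is captured by the disjunctive formulas $\psi_t$ built from the $\phi_i$ as in Lemma~\ref{lem:finitetuple}. Crucially, the explicit order predicate and all of FO are available on the right-hand side---FO because the existential quantifier is the unary quantifier $\Gamma^{U_1,0}_{1,\gamma}$---which is what lets the body formulas select positions and re-index tuples between successive levels of the nest.

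The main obstacle is the internal structure of the for-program. Unlike the clean lexicographic factorization used in Lemma~\ref{lem:nesting}, a general for-program may maintain mutable finite (Boolean) control state across its iterations and emit tuples conditionally, so the global product over $\sqsubset$ does not factor immediately into independent per-iteration sub-products. The heart of the argument is therefore to absorb this finite control into the algebra---replacing $M$ by its product with the finite transition monoid governing the for-program's state---so that each level of the nest still contributes a single well-defined monoid element and the nested unary quantifiers recombine to exactly the product over $\sqsubset$. Verifying that this absorption commutes with the dimension-lowering at every level, and that the left/right direction bookkeeping via $M^{\mathrm{op}}$ stays consistent through the nesting, is the delicate part that the lexicographic argument of Theorem~\ref{thm:finitebinding} never has to confront.
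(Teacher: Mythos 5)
Your proposal is correct and follows essentially the same route as the paper: invoke the first-order $d$-enumerators of Boja\'{n}czyk et al.\ (Lemma~\ref{lem:bojanczyk}) to realize the $(\FO)[<]$-definable order as a nest of directed for-loops, then mirror that nest by nested unary multiplication quantifiers as in Lemma~\ref{lem:nesting}, passing to the opposite monoid for the \texttt{last..first} loops. One remark: the ``main obstacle'' you identify in your final paragraph is not actually present. The enumerators furnished by Lemma~\ref{lem:bojanczyk} carry no mutable Boolean state---the loop body is a fixed list of mutually exclusive $(\FO)[<]$-conditions on the current loop variables, each emitting a projection of those variables (or nothing, in which case the iteration contributes $1_M$)---so the global product already factors cleanly level by level, exactly as in the lexicographic case, and no absorption of a transition monoid into $M$ is needed. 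The paper packages the same argument slightly differently: it first rewrites the single $d$-dimensional quantifier as one $d'$-dimensional quantifier over a generalized lexicographic order, with an enlarged target function $\delta:\{0,1\}^{lc}\to M$ recording which if-branch fired and which monoid element results, and only then applies the nesting decomposition; but this is a reorganization of your induction on loop depth, not a different idea.
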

As we noted in Section~\ref{sec:prelim}, we always assume that the collection $\mathfrak{N}$ of numerical predicates contains the order relation $<$, except in the case where it is empty.  Thus, the statement of the theorem notes that in translating our formulas to use unary quantifiers, we may need to introduce the order symbol if  $\mathfrak{N} = \emptyset$.  From the theorem, we then get the immediate corollaries.
\begin{corollary}
    \(
        \NC^1 = \mathcal{L}(\emph{\text{fo-}}(\Gamma^\fin)[+,\times]).
    \)
\end{corollary}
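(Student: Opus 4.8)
The plan is to obtain this corollary as a direct specialization of Theorem~\ref{thm:morethanlex}, chained with the characterization of $\NC^1$ already in hand. First I would instantiate Theorem~\ref{thm:morethanlex} by taking $\mathfrak{Q} = \varnothing$ and $\mathfrak{N} = \{+,\times\}$. This gives
\[
  \mathcal{L}(\text{fo-}(\Gamma^\fin)[+,\times]) = \mathcal{L}(\text{lex-}(\Gamma^\fin_1)[+,\times,<]).
\]
Since $\{+,\times\}$ is non-empty, the discussion of notation in Section~\ref{sec:prelim} tells us that the order relation $<$ is already implicitly available among these numerical predicates; hence $[+,\times,<]$ denotes the same logic as $[+,\times]$, and the right-hand side above is simply $\mathcal{L}(\text{lex-}(\Gamma^\fin_1)[+,\times])$.

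It then remains to connect this unary-quantifier, lexicographic logic to $\NC^1$. For this I would invoke Theorem~\ref{thm:logfornc1}, which gives $\NC^1 = \mathcal{L}(\text{lex-}(\Gamma^\fin)[+,\times])$, together with Corollary~\ref{cor:openquestionfromlautemann}, which collapses the dimension to $1$ in this same setting, namely $\mathcal{L}(\text{lex-}(\Gamma^\fin)[+,\times]) = \mathcal{L}(\text{lex-}(\Gamma^\fin_1)[+,\times])$. Note that no separate treatment of first-order quantifiers is needed, since $\exists$ and $\forall$ are themselves multiplication quantifiers over the finite monoid $U_1$ and hence already belong to both $\Gamma^\fin$ and $\Gamma^\fin_1$. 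Chaining the three equalities yields
\[
  \NC^1 = \mathcal{L}(\text{lex-}(\Gamma^\fin)[+,\times]) = \mathcal{L}(\text{lex-}(\Gamma^\fin_1)[+,\times]) = \mathcal{L}(\text{fo-}(\Gamma^\fin)[+,\times]),
\]
which is exactly the claimed identity.

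Since the argument is a composition of previously established equalities, there is no genuine mathematical obstacle; the only point demanding attention is bookkeeping around the order predicate. Specifically, Theorem~\ref{thm:morethanlex} reintroduces $<$ into the numerical predicate set on its right-hand side, and I must verify that this addition is vacuous here because $+$ and $\times$ already carry the order with them. This subtlety would be substantive only in the edge case $\mathfrak{N} = \varnothing$; with $\mathfrak{N} = \{+,\times\}$ it is immaterial, so the corollary follows immediately.
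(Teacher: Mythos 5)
Your proof is correct and follows essentially the same route the paper intends: the corollary is stated as an immediate consequence of Theorem~\ref{thm:morethanlex}, instantiated with $\mathfrak{N}=\{+,\times\}$ and chained with the existing characterization $\NC^1 = \mathcal{L}(\text{lex-}(\Gamma^\fin)[+,\times])$ and the dimension collapse of Corollary~\ref{cor:openquestionfromlautemann}. Your bookkeeping about the implicit order predicate and the observation that $\exists,\forall$ are themselves quantifiers in $\Gamma^\fin$ (so $\mathfrak{Q}=\varnothing$ suffices) are both consistent with the paper's conventions.
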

\begin{corollary}
    \emph{\textsc{Reg}} $= \mathcal{L}(\emph{\text{fo-}}(\Gamma^{\fin})[<])$.
\end{corollary}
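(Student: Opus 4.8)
The plan is to derive this corollary directly from the general arity-collapse result, Theorem~\ref{thm:morethanlex}, combined with the Barrington et al.\ characterization of the regular languages by unary multiplication quantifiers over lexicographic interpretations, Theorem~\ref{thm:fologforreg}. The key observation is that the right-hand side $\mathcal{L}(\text{fo-}(\Gamma^{\fin})[<])$ is precisely the instance of the left-hand side of Theorem~\ref{thm:morethanlex} obtained by taking the auxiliary quantifier collection $\mathfrak{Q}$ to be empty and the numerical predicates $\mathfrak{N}$ to be $\{<\}$. So almost all of the work is already done; what remains is to check that the instantiation is legitimate and to chain in the known regular-language characterization.

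First I would instantiate Theorem~\ref{thm:morethanlex} with $\mathfrak{Q} = \varnothing$ and $\mathfrak{N} = \{<\}$. Since $\mathfrak{N} \cup \{<\} = \{<\}$, this yields the equality
\[
    \mathcal{L}(\text{fo-}(\Gamma^{\fin})[<]) = \mathcal{L}(\text{lex-}(\Gamma^{\fin}_1)[<]).
\]
Here I would record two small points of bookkeeping. Taking $\mathfrak{Q} = \varnothing$ is harmless: the standard first-order quantifiers $\exists$ and $\forall$ are themselves multiplication quantifiers over the finite monoid $U_1$ and hence already lie in $\Gamma^{\fin}$, so the fragment $\text{fo-}(\Gamma^{\fin})[<]$ already contains full first-order logic in the bodies of its formulas, in particular the $(\FO)[<]$-formulas used to define the tuple orderings. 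Moreover, the appearance of $\mathfrak{N} = \{<\}$ rather than $\varnothing$ is immaterial, since by the convention established in Section~\ref{sec:prelim} the order relation is always implicitly available to define the interpretations feeding a quantifier.

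Next I would invoke Theorem~\ref{thm:fologforreg}, which states exactly that $\textsc{Reg} = \mathcal{L}(\text{lex-}(\Gamma^{\fin}_1)[<])$. Chaining this with the equality above gives
\[
    \mathcal{L}(\text{fo-}(\Gamma^{\fin})[<]) = \mathcal{L}(\text{lex-}(\Gamma^{\fin}_1)[<]) = \textsc{Reg},
\]
which is the desired statement.

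Since all the heavy lifting is carried out by Theorem~\ref{thm:morethanlex}, I do not expect a genuine obstacle at this level: the proof is a two-line chain of equalities. The only thing requiring care is verifying that the stated corollary really is the $\mathfrak{Q} = \varnothing$, $\mathfrak{N} = \{<\}$ instance of the general theorem---specifically, that dropping the explicit auxiliary quantifiers costs nothing because first-order logic is subsumed by $\Gamma^{\fin}$, and that the conventions governing the implicit presence of $<$ line up identically on both sides of the equation.
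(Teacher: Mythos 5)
Your proposal is correct and takes essentially the same route as the paper, which states this as an immediate consequence of Theorem~\ref{thm:morethanlex}: instantiating that theorem with $\mathfrak{Q} = \varnothing$ and $\mathfrak{N} = \{<\}$ gives $\mathcal{L}(\text{fo-}(\Gamma^{\fin})[<]) = \mathcal{L}(\text{lex-}(\Gamma^{\fin}_1)[<])$, which equals $\textsc{Reg}$ by Theorem~\ref{thm:fologforreg}. Your bookkeeping observations (that first-order quantifiers are subsumed by $\Gamma^{\fin}$ via the $U_1$ multiplication quantifiers, and that the order predicate is implicitly available by the paper's conventions) match the paper's setup and raise no difficulties.
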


Before going into the proof, we first introduce for-programs. As the definition of $d$-enumerators and for-programs in general are relatively intuitive, we keep the definition brief and a more detailed treatment may be found in \cite{bojanczyk2019string, bojanczyk2022transducers}. A \emph{first-order $d$-enumerator} is a  for-program which takes a structure $\mathfrak{A} = (A, <, \tau^\mathfrak{A})$ over vocabulary $\tau$, as input and outputs an enumeration of all the  $d$-tuples of $A$.  The program  consists of a nesting of \emph{for-loops} with a body:
\begin{equation}
  \label{eqn:for-program}
\hbox{
\begin{lstlisting}
    for $y_1$ in $p_1$
        for $y_2$ in $p_2$
            ...
                for $y_{d'}$ in $p_{d'}$
                    body
\end{lstlisting}
}
\end{equation}
Here the $i$th loop iterates the variable $y_i$ over the domain $A$ and $p_i$ determines whether the iteration is in increasing order  ($p_i = \texttt{first..last}$) or decreasing order ($p_i = \texttt{last..first}$), according to the order $<$.

The body consists of a sequence of \emph{if-statements}
\begin{equation}
  \label{eqn:for-body}
\hbox{
\begin{lstlisting}
    if $\theta_1(y_1,\dots,y_{d'})$ then
        output ${(y_{i_1^1},\dots,y_{i_{d}^1})}$
    if $\theta_2(y_1,\dots,y_{d'})$ then
        output ${(y_{i_1^2},\dots,y_{i_{d}^2})}$
    ...
    if $\theta_l(y_1,\dots,y_{d'})$ then
        output ${(y_{i_1^l},\dots,y_{i_{d}^l})}$    
\end{lstlisting}
 }
\end{equation}
Here $\theta_1,\dots,\theta_l$ are $(\FO)[<]$-formulas over the vocabulary $\tau$, specifying mutually exlusive conditions, so that for each assignment of values to the variables $y_1,\ldots,y_{d'}$ at most one of them is satisfied.  Note that which formula is satisfied may depend on the assignment  $\alpha$ of values to free variables other than $y_1,\ldots,y_{d'}$.  We only consider programs which produce, over the course of the iteration, all tuples in $A^d$, with each tuple being output exactly once.  For a string $\mathfrak{A}$ and assignment $\alpha$, we write $P(\mathfrak{A}, \alpha)$ to mean the output of the program $P$ when run on the string $\mathfrak{A}$ with assignment $\alpha$.

From Boja\'{n}czyk et al. \cite[Theorem 12]{bojanczyk2019string}, we know the following fact: 
\begin{lemma}[\cite{bojanczyk2019string}]\label{lem:bojanczyk}
    For every $(\FO)[<]$ definable linear order on $d$-tuples, defined by a formula $\phi_<$ over a vocabulary $\tau$ of unary predicates, there exists a first-order $d$-enumerator $P$ such that for every $\tau$-structure $\str{A} = (A, <, \tau^\str{A})$, $P$ on input $\str{A}$ enumerates the elements of $A^d$ in the order defined by $\phi_<$.
\end{lemma}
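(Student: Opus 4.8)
The plan is to prove the statement by induction on the arity $d$, using the composition (Feferman--Vaught/Shelah) method for first-order logic over finite words as the main engine and building the required $d$-enumerator one loop level at a time. Throughout, a $\tau$-structure with its linear order and unary predicates is regarded as a word over the finite alphabet $2^{|\tau|}$, so that the locality and composition machinery for words applies directly. First I would fix the quantifier rank $r$ of $\phi_<$ and attach to every $d$-tuple $\tup{a} = (a_1,\dots,a_d)$ a \emph{profile}: the relative $<$-order of its components together with, reading the distinct marked positions left to right, the letters they carry and the rank-$r$ $\FO$-types of the at most $d+1$ factors into which they cut the word. There are only boundedly many profiles, independently of $|A|$, and by the composition theorem whether $\tup{a}$ precedes $\tup{b}$ in the order defined by $\phi_<$ depends only on the joint profile of the $2d$ marked positions of $(\tup{a},\tup{b})$.

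The crucial property I would extract from this is that the order is \emph{count-free}: the truth of $\tup{a}\prec\tup{b}$ never depends on how many elements lie in some interval, only on boundedly much local and type information about how the two tuples sit inside $\str{A}$. This is exactly the regime in which nested for-loops operate, since a for-loop scans left-to-right or right-to-left and branches on $(\FO)[<]$ guards but cannot count. The realisation step is then: a single level of loop nesting executes one increasing or decreasing \emph{pass} over the word, and by pinning the loop variable to $\FO$-definable phase boundaries one can sequence a bounded number of passes even when the phases are not contiguous in the $<$-order. Concretely, for the toy order ``all positions of one profile in increasing order, then all of another'', an outer loop that fires its inner pass only when $y_1$ is the $<$-minimal (resp.\ $<$-maximal) element already separates the two phases; more generally, the $k$-th smallest element is $\FO$-definable for each bounded $k$, so the boundedly many profile classes can be put in any fixed sequence this way.

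The inductive step reduces the order on $A^d$ to a first pass over one coordinate followed by a recursion on the remaining coordinates: within a fixed profile the components have a fixed shape, so the inductive hypothesis supplies a for-program enumerating each residual block with the fixed coordinate(s) passed in as parameters, and the innermost body outputs the tuple $(y_{i_1},\dots,y_{i_d})$ assembled from the loop variables. The hard part will be that an $\FO$-definable order need \emph{not} be lexicographic in any fixed coordinate: blocks of distinct profiles may interleave, so one cannot simply enumerate first components and recurse. The heart of the proof, and the place where count-freeness is indispensable, is to show that this interleaving has bounded depth with $\FO$-definable boundaries — that after refining the $n^d$ tuples into the boundedly many profile classes and ordering those classes, $\prec$ becomes a finite union of increasing/decreasing scans each restricted by an $\FO$ condition. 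Establishing this bounded-depth, uniform-in-$n$ decomposition via the composition theorem (together with the small-$n$ bookkeeping needed to pin more phases than there are elements) is where I expect essentially all the difficulty to lie; once it is in hand, each scan consumes one more loop level and the recursion on the remaining coordinates closes the induction.
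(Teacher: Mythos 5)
There is a genuine gap, and it is worth being clear about the benchmark: the paper does \emph{not} prove this lemma at all. It is imported as a black box from Boja\'{n}czyk et al.~\cite[Theorem 12]{bojanczyk2019string}, so what you set out to do is reprove the main theorem of that external paper from scratch. Your composition-method setup is sound as far as it goes: rank-$r$ profiles of marked positions do exist in bounded number, and the composition theorem does imply that whether $\tup{a} \prec \tup{b}$ holds depends only on the joint profile of the $2d$ marked positions. But everything after that hinges on the claim that $\prec$ decomposes, uniformly in $n$, into a bounded number of increasing/decreasing scans over $(\FO)[<]$-definable classes with $(\FO)[<]$-definable phase boundaries, so that each scan can be charged to one loop level and the recursion on the remaining coordinates can close. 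You explicitly defer this step (``where I expect essentially all the difficulty to lie'') rather than prove it. That deferred step \emph{is} the theorem: it is exactly the combinatorial core that Boja\'{n}czyk et al.\ isolate (their Domination Lemma, which the paper's conclusion mentions by name), and its proof occupies the bulk of their paper; it requires an analysis of how blocks of tuples with different profiles can nest and interleave, which does not follow from the composition theorem alone. In particular, nothing in your sketch rules out, for instance, that the first coordinates of consecutive output tuples interleave in a pattern whose ``depth'' grows with $n$; excluding that is precisely what needs proving. A type-theoretic preamble plus an unproven assertion of the theorem's core is a plan, not a proof.

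Two further points you treat as routine also need repair, though they are secondary. First, your slogan of \emph{count-freeness} is stated too strongly: rank-$r$ $(\FO)[<]$ types do record threshold counts (up to a bound depending on $r$), so the order can depend on how many elements lie in an interval, just only up to a fixed threshold; your profiles capture this, but the claim as written is false and should not be leaned on. Second, the induction on $d$ requires a parameterized strengthening of the statement: the residual order on the last $d-1$ coordinates depends on the value of the fixed coordinate, so you need a \emph{single} $(d-1)$-enumerator whose guards take that coordinate as a free variable and which is correct for all parameter values simultaneously, not one program per block; this strengthening must be built into the induction hypothesis from the start. If your aim is to use the lemma in the context of this paper, the right move is the one the paper makes: cite \cite[Theorem 12]{bojanczyk2019string} rather than reprove it.
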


We now use this to prove Theorem~\ref{thm:morethanlex}.  We first use Lemma~\ref{lem:bojanczyk} to transform a formula of $(\mathfrak{Q}  \cup \Gamma^\fin)[\mathfrak{N}]$ into one in which all applications of quantifiers in $\Gamma^{\fin}$ use interpretations using a \emph{generalized lexicographic order}, and then use the techniques introduced in Section~\ref{sec:mult} to transform this to a formula using only unary quantifiers.

\begin{definition}\label{def:gen-lex}
  Given a set $A$ with a linear order $<$ on its elements and $d \in \mathbb{Z}^+$, a linear order $\prec$ on $A^d$ is called a \emph{generalized lexicographic order} if there is a sequence $\mathrm{dir} \in \{l,r\}^d$ such that $\vec{a} \prec \vec{b}$ if, and only if, for the least $i$ for which $a_i \neq b_i$, we have $a_i < b_i$ if $\mathrm{dir}_i = l$ and $a_i > b_i$ if $\mathrm{dir}_i = r$.
\end{definition}
Note that the standard lexicographic order is a generalized lexicographic order with $\mathrm{dir} = l^d$.

\begin{proof}[Proof of Theorem~\ref{thm:morethanlex}]
Let $\mathfrak{Q}$ be any collection of quantifiers and $\mathfrak{N}$ any collection of numerical predicates. Let $M = \{m_1, \dots, m_c\}$ be a finite monoid, $B \subseteq M$, $\gamma : \{0,1\}^k \rightarrow M$, and $\phi_<$ a $(\FO)[<]$-formula over a vocabulary $\tau$ which defines a linear ordering on $d$-tuples. Let $\phi_1, \dots, \phi_k$ be $\text{fo-}(\mathfrak{Q} \cup \Gamma^\fin)[\mathfrak{N}]$-formulas over $\tau$ giving an interpretation $I = (\phi_<, \phi_1,\dots,\phi_k)$ of dimension $d$. We want to show that for 
$$\Phi := \Gamma^{M,B}_{d,\gamma}\tup{x}\tup{y}(\phi_<(\tup{x}),\phi_1(\tup{y}), \dots, \phi_k(\tup{y}))[\alpha],$$
there is an equivalent formula in $\text{lex-}(\mathfrak{Q} \cup \Gamma^\fin_1)[\mathfrak{N} \cup \{<\}]$.
We first prove by induction on the nesting of multiplication quantifiers that $\Phi$ is equivalent to a formula in which all applications of quantifiers in $\Gamma^{\fin}$ are to interpretations in which the order defined is a generalized lexicographic order.  Note that formulas (e.g., $\phi_{<}$) defining the order in interpretations are in $(\FO)[<]$ and, thus, by assumption do not have nested multiplication quantifiers.

For the base case, say we have a formula which uses no multiplication quantifiers; then, we are done as this is a formula of $\text{lex-}(\mathfrak{Q} \cup \Gamma^\fin_1)[\mathfrak{N} \cup \{<\}]$.
For the inductive step,  assume that $\phi_1, \dots, \phi_k$ are formulas of $\text{fo-}(\mathfrak{Q} \cup \Gamma^\fin_1)[\mathfrak{N} \cup \{<\}]$ in which all orders are generalized lexicographical orders.

By Lemma~\ref{lem:bojanczyk}, we have a first-order $d$-enumerator $P$ which enumerates the elements of $A^d$ in the order defined by $\phi_<$ on any $\tau$-structure $\mathfrak{A}$.  $P$ has the form specified in~(\ref{eqn:for-program}) and~(\ref{eqn:for-body}). Recall that in this for-program we have $l$ if-statements. Now, let $\Phi'$ be the formula \begin{align*}
    \Phi' := \Gamma^{M,B}_{d',\delta} \tup{x} y_1 \dots y_{d'} (\xi_<(\tup{x}), 
        &\xi_1(y_1,\dots,y_{d'}), \dots, \xi_c(y_1, \dots, y_{d'}), \\
        &\xi_{c+1}(y_1, \dots, y_{d'}), \dots, \xi_{2c}(y_1, \dots, y_{d'}), \\
        &\dots, \\
        &\xi_{(l-1)c+1}(y_1, \dots, y_{d'}), \dots, \xi_{lc}(y_1, \dots, y_{d'})
    )
\end{align*}
where $\delta : \{0,1\}^{lc} \rightarrow M$ is such that for a word $w = w_1\dots w_{lc} \in \{0,1\}^{lc}$, $\delta(w) = m_i$ if there exists a $j$ where $0 \leq j < l$ such that the substring $w_{jc+1} \dots w_{(j+1)c}$ is equal to the one-hot encoding of $i$ and all other characters in $w$ are $0$. In other words, partitioning $w$ into $l$ consecutive $c$-length substrings, we map $w$ to $m_i$ if one of these substrings is the one-hot encoding of $i$ and all others are simply strings of $0$s. All other elements of the domain are mapped to the identity of $M$. Moreover, for $0 \leq j < l$ and $1 \leq j' \leq c$, $\xi_{jc+j'}(y_1, \dots, y_{d'})$ is the formula $\theta_j(y_1, \dots, y_{d'}) \wedge \psi_{m_{j'}}(y_{i^j_1}, \dots, y_{i^j_d})$ where ${i^j_1}, \dots, {i^j_d}$ are defined as they are in the for-program $P$, and $\psi_{m_{j'}}$ is as defined in Lemma~\ref{lem:finitetuple}; therefore, $\xi_{jc+j'}$ is satisfied if  $\theta_j(y_1, \dots, y_{d'})$ is satisfied and the element of $M$ indexed by $(y_{i^j_1}, \dots, y_{i^j_d})$ equals $m_{j'}$. Lastly, $\xi_<$ is the generalized lexicographic order given by the vector $\mathrm{dir} \in \{l,r\}^{d'}$ with $\mathrm{dir}_i = l$ if  $p_{i} = \texttt{first..last}$ and $\mathrm{dir}_i = r$ if $p_{d'} = \texttt{last..first}$.
We now prove that $\mathfrak{A} \models \Gamma^{M,B}_{d,\gamma}\tup{x}\tup{y}(\phi_<(\tup{x}),\phi'_1(\tup{y}), \dots, \phi'_k(\tup{y}))[\alpha]$ if, and only if, $\mathfrak{A} \models \Phi'[\alpha]$.

Let $1_M$ denote the identity of $M$ and $f_{\mathfrak{A},\alpha} : A^d \cup \{\epsilon\} \rightarrow M$ be the function defined by 
$$f_{\mathfrak{A},\alpha}(a_1, \dots, a_d) = \gamma({\phi'_1}^{\mathfrak{A},\alpha}[a_1,\dots,a_d],\dots,{\phi'_k}^{\mathfrak{A},\alpha}[a_1,\dots,a_d])$$ for $(a_1, \dots, a_d) \in A^d$ and $f_{\mathfrak{A},\alpha}(\epsilon) = 1_M$.
Let $g_{\mathfrak{A},\alpha} : A^{d'} \rightarrow A^d \cup \{\epsilon\}$ be the function defined by the body of the for-program $P$ where if no output is produced, $g_{\mathfrak{A},\alpha}(a_1, \dots, a_{d'}) = 0^d$; thus, $$
    g_{\mathfrak{A},\alpha}(a_1, \dots, a_{d'}) = \begin{cases}
        (a_{i^j_1}, \dots, a_{i^j_d}) & \text{if } \exists j \text{ such that } \mathfrak{A} \models \theta_j(a_1, \dots, a_{d'})\\
        \epsilon & \text{o.w.}
    \end{cases}
$$
Let $f^*_{\mathfrak{A},\alpha} : (A^d)^* \rightarrow M$ and $g^*_{\mathfrak{A},\alpha} : (A^{d'})^* \rightarrow (A^d)^*$ be the monoid homomorphisms induced by the functions $f_{\mathfrak{A},\alpha}$ and $g_{\mathfrak{A},\alpha}$, respectively, in the natural way.

Let $\mathfrak{A}$ be an arbitrary $\tau$-structure and $\alpha$ an arbitrary variable assignment with assignments for at least all free variables in each $\phi'_i$ with the exception of $\tup{y}_i$. By definition of multiplication quantifiers, we get that
$$\mathfrak{A} \models \Gamma^{M,B}_{d,\gamma}\tup{x}\tup{y}(\phi_<(\tup{x}),\phi'_1(\tup{y}), \dots, \phi'_k(\tup{y}))[\alpha] \text{ iff } \gamma(w_{I(\mathfrak{A},\alpha)}) \in B.$$


By construction of $f_{\mathfrak{A},\alpha}$ and because $P$ provides the order by which $\Gamma^{M,B}_{d,\gamma}$ is evaluated, it is easy to see that $\gamma(w_{I(\mathfrak{A},\alpha)}) \in B \text{ iff } f^*_{\mathfrak{A},\alpha}(P(\mathfrak{A}, \alpha)) \in B$.

Let $t_1, \dots, t_{|A|^{d'}}$ denote the ordering of $d'$-tuples of $A$ as defined by $\xi_<$. Because the output of $P$ is an enumeration of $A^d$ in the ordering defined by $\phi_<$, and $g_{\mathfrak{A},\alpha}$ outputs the empty string $\epsilon$ during an iteration of its for-loops only when $P$ outputs nothing at that iteration, it follows that $$
    f^*_{\mathfrak{A},\alpha}(g^*_{\mathfrak{A},\alpha}(t_1\dots t_{A^{d'}})) = f^*_{\mathfrak{A},\alpha}(P(\mathfrak{A},\alpha))$$
and, thus, $f^*_{\mathfrak{A},\alpha}(g^*_{\mathfrak{A},\alpha}(t_1\dots t_{A^{d'}})) \in B$ iff $f^*_{\mathfrak{A},\alpha}(P(\mathfrak{A},\alpha)) \in B$.

Let $I' = (\xi_<, \xi_1, \dots, \xi_{lc})$. We now want to show that 
$$\delta(w_{I'(\mathfrak{A},\alpha)}) \in B \text{ iff } f^*_{\mathfrak{A},\alpha}(g^*_{\mathfrak{A},\alpha}(t_1\dots t_{A^{d'}})) \in B.$$ 
To do so, we prove that $\delta((w_{I'(\mathfrak{A},\alpha)})_a) = f^*_{\mathfrak{A},\alpha}(g^*_{\mathfrak{A},\alpha}(t_a))$ for every $0 \leq a \leq |A|^{d'}$. Let $a$ be arbitrary. Then, $u = (w_{I(\mathfrak{A},\alpha)})_a$ is either $0^{lc}$ or a one-hot encoding of some $b \in [lc]$. If $u = 0^{lc}$, then $\delta(u) = 1_M$ and no $\theta_j$ is satisfied. Thus, $g^{*}_{\mathfrak{A},\alpha}(t_a) = \epsilon$ and $f^*_{\mathfrak{A},\alpha}(\epsilon) = 1_M$ by construction. If $u$ is a one-hot encoding of some $b \in [lc]$, then some $\theta_j$ is satisfied and, 
$$\delta(u) = m = \gamma({\phi'_1}^{\mathfrak{A},\alpha}[a_{i^j_1},\dots,a_{i^j_d}],\dots,{\phi'_k}^{\mathfrak{A},\alpha}[a_{i^j_1},\dots,a_{i^j_d}]) = f_{\mathfrak{A},\alpha}(a_{i^j_1}, \dots, a_{i^j_d})$$
 by definition of $\delta$ and the construction of each $\xi_i$. By the construction of $g_{\mathfrak{A},\alpha}$ and $f_{\mathfrak{A},\alpha}$, we also get immediately that $g^*_{\mathfrak{A},\alpha}(t_a) = (a_{i^j_1}, \dots, a_{i^j_d})$ and, thus, $f^*_{\mathfrak{A},\alpha}(g^*_{\mathfrak{A},\alpha}(t_a)) = m$. Thus, $\delta(w_{I'(\mathfrak{A},\alpha)}) \in B$ if, and only if, $f^*_{\mathfrak{A},\alpha}(g^*_{\mathfrak{A},\alpha}(t_1 \dots t_{|A|^{d'}})) \in B$.

Then, by definition, we have that $\mathfrak{A} \models \Phi'[\alpha]$ if, and only if, $\delta(w_{I'(\mathfrak{A},\alpha)}) \in B$.

Therefore, all together, we have that \begin{align*}
    &\mathfrak{A} \models \Gamma^{M,B}_{d,\gamma}\tup{x}\tup{y}(\phi_<(\tup{x}), {\phi'_1}(\tup{y}), \dots, \phi'_k(\tup{y}))[\alpha]\\
    \text{ iff } &\gamma(w_{I(\mathfrak{A},\alpha)}) \in B\\
    \text{ iff } &f^*_{\mathfrak{A},\alpha}(P(\mathfrak{A},\alpha)) \in B\\
    \text{ iff } &f^*_{\mathfrak{A},\alpha}(g^*_{\mathfrak{A},\alpha}(t_1\dots t_{A^{d'}})) \in B\\
    \text{ iff } &\delta(w_{I'(\mathfrak{A},\alpha)}) \in B\\
    \text{ iff } &\mathfrak{A} \models \Phi''[\alpha].
\end{align*}

This concludes the inductive step. Therefore, for any formula $\Phi$, we have an equivalent formula $\Phi'$ which uses only uses multiplication quantifiers using a generalized lexicographic order. 

To see that we can construct a formula equivalent to $\Phi'$ using only unary quantifiers, we use the technique used in the proof of Lemma~\ref{lem:nesting}  with a minor modification to account for the generalized lexicographic order. In the proof of Lemma~\ref{lem:nesting}, a quantifier $\Gamma^{M,B}_{d,\gamma}$ of dimension $d$ is replaced by a sequence of $d$ nested unary quantifiers over the same monoid $M$.  
To account for the order $\xi_<$, which defines a generalized lexicographic order with $\mathrm{dir} \in \{l,r,\}^d$, we define the $i$th quantifier in the sequence as before if $\mathrm{dir}_i = l$ and replace the monoid $(M,\cdot)$ with $(M, \cdot^R)$ where $m \cdot^R m' := m' \cdot m$ for all $m,m' \in M$ if $\mathrm{dir}_i = r$.  The proof follows exactly along the lines of the proof of  Lemma~\ref{lem:nesting}.

All together, we have an equivalent formula to $\Phi'$, and thus $\Phi$, which only uses unary quantifiers, completing the proof.  
\end{proof}

\section{Conclusion}\label{sec:conc}

In this work, we constructed a class of typed monoids exactly recognizing $\NC^1$. To do so, we proved results regarding the expressive power of logics with quantifiers defined over finite monoids.  Specifically, we established that the expressive power is not changed by restricting the dimension of the interpretations on which the quantifiers act, regardless of which numerical predicates are available.

Therefore, we were able to provide a logic characterizing $\NC^1$ which only uses unary quantifiers and use this logic to construct an algebraic characterization of $\NC^1$.  This result marks the second circuit complexity class to be characterized in a such a way, with the first being $\TC^0$ \cite{krebs2007characterizing}, and provides a decomposition theorem in the style of Krohn-Rhodes for $\NC^1$.

An interesting future direction would be to  construct similar algebraic characterization of other complexity classes beyond $\NC^1$.  It seems this would require the development of new algebraic tools.  In particular, the block product is a key tool used to characterize first-order quantification and more generally quantification over interpretations of dimension one.  To extend the work to other complexity classes, it would be worthwhile investigating other product constructions that might similarly relate to first-order quantifiiers of higher arity as well as higher-order quantifiers.

Moreover, we extended the work of Barrington et al. \cite{barrington1990uniformity} and Lautemann et al. \cite{lautemann2001descriptive} by studying the expressive power of multiplication quantifiers when applied to interpretations using other than the standard lexicographic order.  We showed that over strings the expressive power of logics equipped with finite multiplication quantifiers is not changed by loosening the restriction to permitting any first-order definable linear order. A natural next step would be to investigate the expressive power when other linear orders are permitted. For example, by extending the Domination Lemma of Boja\'{n}czyk et al. \cite{bojanczyk2019string} to monadic second-order logic (MSO[$<$]), one would be able to show that permitting the application of multiplication quantifiers to interpretations with an MSO[$<$]-definable order does not change the expressive power. This would result in the interesting consequence that $$\textsc{Reg} = \mathcal{L}(\text{lex-}(\Gamma^{\fin}_1)[<]) = \mathcal{L}((\Gamma^{\fin})[<]).$$

\printbibliography

@book{straubing1994finite,
  title={{Finite Automata, Formal Logic, and Circuit Complexity}},
  author={Straubing, Howard},
  year={1994},
  publisher={Birkh{\"a}user}
}

@book{eilenberg1976automata,
  title={{Automata, Languages, and Machines (Vol. B)}},
  author={Eilenberg, Samuel},
  year={1976},
  publisher={Academic Press}
}

@article{rhodes1989kernel,
  title={The kernel of monoid morphisms},
  author={Rhodes, John and Tilson, Bret},
  journal={J. Pure Appl. Algebra},
  volume={62},
  number={3},
  pages={227--268},
  year={1989}
}

@phdthesis{krebs2008typed,
  title={Typed Semigroups, Majority Logic, and Threshold Circuits},
  author={Krebs, Andreas},
  year={2008},
  school={T{\"u}bingen, Univ., Diss., 2008}
}

@article{krebs2007characterizing,
  title={{Characterizing TC0 in terms of infinite groups}},
  author={Krebs, Andreas and Lange, Klaus-J{\"o}rn and Reifferscheid, Stephanie},
  journal={Theory of Computing Systems},
  volume={40},
  number={4},
  pages={303--325},
  year={2007},
  publisher={Springer}
}

@inproceedings{behle2007linear,
  title={Linear circuits, two-variable logic and weakly blocked monoids},
  author={Behle, Christoph and Krebs, Andreas and Mercer, Mark},
  booktitle={International Symposium on Mathematical Foundations of Computer Science},
  pages={147--158},
  year={2007},
  organization={Springer}
}

@inproceedings{behle2011typed,
  title={{Typed monoids--An Eilenberg-like theorem for non regular languages}},
  author={Behle, Christoph and Krebs, Andreas and Reifferscheid, Stephanie},
  booktitle={Algebraic Informatics: 4th International Conference, CAI 2011, Linz, Austria, June 21-24, 2011. Proceedings 4},
  pages={97--114},
  year={2011},
  organization={Springer}
}

@article{cano2021positive,
  title={{A positive extension of Eilenberg's variety theorem for non-regular languages}},
  author={Cano, Antonio and Cantero, Jesus and Mart{\'\i}nez-Pastor, Ana},
  journal={Applicable Algebra in Engineering, Communication and Computing},
  volume={32},
  number={5},
  pages={553--573},
  year={2021},
  publisher={Springer}
}

@article{barrington1990uniformity,
  title={{On uniformity within NC1}},
  author={Barrington, David A Mix and Immerman, Neil and Straubing, Howard},
  journal={Journal of Computer and System Sciences},
  volume={41},
  number={3},
  pages={274--306},
  year={1990},
  publisher={Elsevier}
}

@book{vollmer1999introduction,
  title={Introduction to Circuit Complexity: A Uniform Approach},
  author={Vollmer, Heribert},
  year={1999},
  publisher={Springer Science \& Business Media}
}

@article{lautemann2001descriptive,
  title={{The descriptive complexity approach to LOGCFL}},
  author={Lautemann, Clemens and McKenzie, Pierre and Schwentick, Thomas and Vollmer, Heribert},
  journal={Journal of Computer and System Sciences},
  volume={62},
  number={4},
  pages={629--652},
  year={2001},
  publisher={Elsevier}
}

@book{schweikardt2002expressive,
  title={{On the Expressive Power of First-order Logic with Built in Predicates}},
  author={Schweikardt, Nicole},
  year={2002},
  publisher={Logos-Verlag}
}

@inproceedings{lange2004some,
  title={Some results on majority quantifiers over words},
  author={Lange, Klaus-J\"orn},
  booktitle={Proceedings. 19th IEEE Annual Conference on Computational Complexity, 2004.},
  pages={123--129},
  year={2004},
  organization={IEEE}
}

@inproceedings{bojanczyk2019string,
  title={String-to-String Interpretations With Polynomial-Size Output},
  author={Boja\'{n}czyk, Mikolaj and Kiefer, Sandra and Lhote, Nathan},
  booktitle={46th International Colloquium on Automata, Languages, and Programming (ICALP 2019)},
  volume={132},
  pages={106},
  year={2019},
  organization={Schloss Dagstuhl--Leibniz-Zentrum fuer Informatik}
}

@inproceedings{bojanczyk2022transducers,
  title={Transducers of polynomial growth},
  author={Boja\'{n}czyk, Mikolaj},
  booktitle={Proceedings of the 37th Annual ACM/IEEE Symposium on Logic in Computer Science},
  pages={1--27},
  year={2022}
}

@article{cook1971characterizations,
  title={{Characterizations of Pushdown Machines in Terms of Time-Bounded Computers}},
  author={Cook, Stephen A},
  journal={Journal of the ACM (JACM)},
  volume={18},
  number={1},
  pages={4--18},
  year={1971},
  publisher={ACM New York, NY, USA}
}

@article{hennie1965one,
  title={One-tape, off-line Turing machine computations},
  author={Hennie, Fred C},
  journal={Information and Control},
  volume={8},
  number={6},
  pages={553--578},
  year={1965},
  publisher={Elsevier}
}

@incollection{Ebb85,
   author = {Ebbinghaus, Heinz-Dieter},
   address = {New York},
   booktitle = {Model-Theoretic Logics},
   editor = {Jon Barwise and Solomon Feferman},
   pages = {25-76},
   publisher = {Springer-Verlag},
   title = {Extended Logics: The General Framework},
   year = {1985}
}

@book{rhodes2010applications,
  title={Applications of Automata Theory and Algebra: Via the Mathematical Theory of Complexity to Biology, Physics, Psychology, Philosophy, and Games},
  author={Rhodes, John L},
  year={2010},
  publisher={World Scientific}
}

@book{EF99,
   author = {Heinz-Dieter Ebbinghaus and J\"org Flum},
   publisher = {Springer},
   title = {Finite Model Theory}, 
   year = {1999},
   edition = {2nd}
}

\end{document}